\documentclass[11pt]{article}
\usepackage[hyperref]{acl}
\usepackage{times}
\usepackage{latexsym}
\usepackage{microtype}
\usepackage{amsmath}
\usepackage{amssymb}
\usepackage{amsthm}
\usepackage{booktabs}
\usepackage{multirow}
\usepackage{graphicx}
\usepackage{xcolor}
\usepackage{enumitem}
\usepackage{url}

\newtheorem{theorem}{Theorem}
\newtheorem{definition}{Definition}
\newtheorem{proposition}{Proposition}
\newtheorem{corollary}{Corollary}
\newtheorem{remark}{Remark}
\newtheorem{assumption}{Assumption}
\newtheorem{lemma}{Lemma}

\newcommand{\sys}{\textsc{TwoStage}}
\newcommand{\sysplus}{\textsc{TwoStage+RSSG}}
\newcommand{\sysbase}{\textsc{DenseReRank}}
\newcommand{\dataset}{\textsc{FinSuperQA}}

\title{Controlling Authority Retrieval:\\
A Missing Retrieval Objective for Authority-Governed Knowledge}

\author{
  Andre Bacellar \\
  \texttt{andremi@gmail.com}
}

\begin{document}
\maketitle

\begin{abstract}
In any domain where knowledge accumulates under formal authority---law, drug regulation, software security---a later document can formally void an earlier one while remaining semantically distant from it. We formalize this as \textbf{Controlling Authority Retrieval (CAR)}: recovering the \emph{active frontier} $\mathrm{front}(\mathrm{cl}(A_k(q)))$ of the authority closure of the semantic anchor set---a different mathematical problem from $\arg\max_d s(q,d)$.

The two central results are: \textbf{Theorem~4} (CAR-Correctness Characterization) gives necessary-and-sufficient conditions on \emph{any} retrieved set $\mathcal{R}$ for $\mathrm{TCA}(\mathcal{R},q) = 1$---frontier inclusion and no-ignored-superseder---independent of how $\mathcal{R}$ was produced. \textbf{Proposition~2} (Scope Identifiability Upper Bound) establishes $\varphi(q)$ as a hard worst-case ceiling: for any scope-indexed algorithm (one with no access to the authority relation $\rightsquigarrow$), $\mathrm{TCA}@k \leq \varphi(q) \cdot R_{\mathrm{anchor}}(q)$, proved by an adversarial permutation argument over the $\kappa(q)$ indistinguishable authority chains.

Supporting structure: \textbf{Theorem~1} (Objective Mismatch) proves TCA and Recall@$k$ are decoupled---TCA $< \varepsilon$ while Recall@$k = 1$ for any $\varepsilon > 0$---and a corollary shows TCA@$k = 0$ when $N^+(q) \geq k$ documents outscore $c^*$, explaining why larger dense models are strictly worse. \textbf{Theorem~2} (Closure Necessity) proves any system achieving TCA $= 1$ must perform proactive entity-scoped search. \textbf{Theorem~3} (Decomposition) proves anchor discovery and authority resolution are each necessary; Corollary~\ref{cor:achieve} proves TCA $= 1$ is achievable when $\varphi(q) = 1$. \textbf{Proposition~1} (Authority Error Factorization) organizes TCA $= R_{\mathrm{anchor}} \cdot \varphi \cdot R_{\mathrm{frontier}}$; Lemma~1 proves the three factors are separable failure modes by construction. \textbf{Proposition~3} (Frontier Recovery Complexity) establishes $\Omega(|\mathcal{R}_\sigma|)$ oracle-call lower bound and separates it from the RSSG's $O(|\mathcal{R}_\sigma|^2 \cdot m)$ implementation cost.

Three independent real-world corpora validate that the proved structure is consequential: security advisories (Dense TCA@5~$= 0.270$, two-stage~\textbf{0.975}), SCOTUS overruling pairs (Dense~$= 0.172$, two-stage~\textbf{0.926}), FDA drug records (Dense~$= 0.064$, two-stage~\textbf{0.774}). A GPT-4o-mini experiment shows the downstream cost: Dense RAG produces explicit ``not patched'' claims for \textbf{39\%} of queries where a patch exists; Two-Stage cuts this to 16\%. Four benchmark datasets, domain adapters, and a single-command scorer are released at \url{https://github.com/andremir/car-retrieval}.
\end{abstract}

\section{Introduction}
\label{sec:intro}

In any domain where knowledge accumulates incrementally, later facts can formally override earlier ones. Legal precedents are overruled by later courts. FDA drug approvals are superseded by recall enforcement actions. Security patches void prior vulnerability disclosures. Compliance clearances are overridden by blackout announcements. In each case, the correct answer to a query does not depend on the most semantically relevant document alone---it depends on whether that document has been \emph{superseded} by a later one. We call this family of retrieval problems \emph{controlling authority retrieval}, and show that standard retrieval systems fail on it structurally across all three authority-governed domains we test.

Retrieval-augmented generation (RAG) \cite{lewis2020rag} ranks documents by semantic relevance to the query. This is silent on supersession. A RAG system asked whether a drug approval is still valid retrieves the approval document (highest relevance), not the recall notice (low relevance, different vocabulary). It has no mechanism to detect that the approval was voided.

\paragraph{Running example.} A security engineer asks: ``Is the prototype pollution vulnerability in lodash still unpatched?'' The most relevant document is the CVE disclosure---it matches \emph{lodash}, \emph{prototype pollution}, and \emph{unpatched} exactly, and states no fix is available. The correct answer requires the patch release note published two months later, which formally supersedes the disclosure for the same package and CVE. In a 2,300-document corpus mixing CVE disclosures and release notes, this patch note ranks~420th by dense similarity: it describes a ``release'' with ``fixed in version~4.17.12''---vocabulary orthogonal to the vulnerability query. Dense retrieval surfaces the disclosure (Recall@5~$= 0$ for the patch note), and TCA~$= 0$: the system answers that the vulnerability is unpatched and has ignored the controlling authority document.

\paragraph{Contributions.}
\begin{enumerate}
  \item \textbf{Formalization}: CAR target $\mathrm{front}(\mathrm{cl}(A_k(q)))$ over partially ordered corpora; operators $\mathrm{cl}(\cdot)$, $\mathrm{front}(\cdot)$; quantities $\Delta(q)$ (semantic-authority divergence), $\varphi(q)$ (scope identifiability), $\kappa(q)$ (scope ambiguity ratio); metric TCA; entity-scope axiom on the supersession relation.

  \item \textbf{Primary theory} (the center of the paper):
  \begin{itemize}[noitemsep]
    \item \textbf{Theorem~4} (Characterization): $\mathcal{R}$ achieves TCA $= 1$ \emph{if and only if} it satisfies frontier inclusion and no-ignored-superseder. Necessary-and-sufficient conditions on any retrieved set, independent of retrieval method. Any future algorithm is auditable against these two conditions.
    \item \textbf{Proposition~2} (Scope Identifiability Upper Bound): for any scope-indexed algorithm (no access to $\rightsquigarrow$), $\mathrm{TCA}@k \leq \varphi(q) \cdot R_{\mathrm{anchor}}(q)$. Proved by adversarial permutation: the $\kappa(q)$ authority chains are informationally indistinguishable to any scope-indexed algorithm, giving a worst-case ceiling, not an average-case claim.
  \end{itemize}

  \item \textbf{Supporting theory} (establishes the problem and organizes the picture):
  \begin{itemize}[noitemsep]
    \item Theorem~1 (Objective Mismatch): TCA and Recall@$k$ decouple when $\Delta(q) > 0$; Corollary: TCA@$k = 0$ when $N^+(q) \geq k$.
    \item Theorem~2 (Closure Necessity): any TCA $= 1$ system must perform proactive entity-scoped search.
    \item Theorem~3 (Decomposition): anchor discovery and authority resolution are each necessary; Corollary~\ref{cor:achieve} (Achievability) proves TCA $= 1$ when $\varphi(q) = 1$.
    \item Proposition~1 (Authority Error Factorization) + Lemma~1 (Failure Mode Separability): TCA $= R_{\mathrm{anchor}} \cdot \varphi \cdot R_{\mathrm{frontier}}$; the three factors address separable, independently occurring failure modes.
    \item Proposition~3 (Frontier Recovery Complexity): $\Omega(|\mathcal{R}_\sigma|)$ oracle lower bound; RSSG costs $O(|\mathcal{R}_\sigma|^2 \cdot m)$ without oracle access.
    \item Four formal CAR generalizations (Definitions~\ref{def:latentcar}--\ref{def:streamcar}) as open problem statements.
  \end{itemize}

  \item \textbf{Empirical validation}: Three real-world corpora (GHSA, SCOTUS, FDA) confirm Dense TCA@5 $\in \{0.064, 0.172, 0.270\}$; Two-Stage achieves $\{0.774, 0.926, 0.975\}$. Scale worsens Dense. GPT-4o-mini downstream: Dense causes ``not patched'' assertions for 39\%; Two-Stage reduces to 16\%. Datasets and scorer released; seven open research directions in \S\ref{sec:research}.
\end{enumerate}

\section{Problem Formalization}
\label{sec:problem}

\subsection{Temporally Ordered Knowledge Bases}

\begin{definition}[Document and Supersession Relation]
A \emph{temporally ordered knowledge base} $K = \{d_1, \ldots, d_n\}$ where each document $d_i$ has a timestamp $d_i.\text{time}$, an entity scope $d_i.\sigma \subseteq \Sigma$, and content $d_i.\text{text}$.

A \emph{supersession rule} $R$ induces $d_1 \rightsquigarrow_R d_2$ (``$d_2$ supersedes $d_1$'') when:
$(d_1.\text{type}, d_2.\text{type}) \in R.\text{type\_pairs}$,
$d_2.\text{time} > d_1.\text{time}$, and
$R.\text{scope}(d_1, d_2) = \text{True}$.

The supersession relation $\rightsquigarrow$ is the transitive closure over all rules. A document $d$ is \emph{active} in $K$ if no $d' \in K$ satisfies $d \rightsquigarrow d'$.

\textbf{Entity-scope axiom.} We require every rule $R$ to satisfy: $R.\text{scope}(d_1,d_2) = \text{True}$ only if $d_1.\sigma \cap d_2.\sigma \neq \emptyset$. Consequently, $d_1 \rightsquigarrow d_2$ implies $d_1.\sigma \cap d_2.\sigma \neq \emptyset$ (supersession never connects documents with disjoint entity scopes). This axiom is satisfied by all five domain examples above and is used in the proof of Theorem~2.
\end{definition}

\begin{definition}[Controlling Authority Retrieval]
\label{def:controlling_authority}
A retrieval task is a \emph{controlling authority task} if the correct document to return is not the most semantically relevant document in the corpus, but rather the document that \emph{controls} the semantically relevant one under a formal precedence or supersession relation. Controlling authority tasks form a proper subset of retrieval tasks, characterized by three properties:
\begin{enumerate}[noitemsep,label=(\roman*)]
  \item There exists an authoritative ordering $\rightsquigarrow$ on documents independent of semantic similarity to the query. ($\rightsquigarrow$ is any strict partial order; temporal precedence is one instance, but authority can be non-temporal, e.g., statutory hierarchy over case law, or ontological subsumption.)
  \item The query vocabulary aligns with a non-controlling document (the \emph{anchor}).
  \item The controlling document has domain-distinct vocabulary (different event type, action terms, or register).
\end{enumerate}
\end{definition}

\begin{definition}[Authority Closure and Active Frontier]
\label{def:closure}
Let $A \subseteq K$ be a set of documents and $\rightsquigarrow$ the authority order on $K$.
The \emph{authority closure} of $A$ is:
\[
  \mathrm{cl}(A) = A \cup \{d' \in K : \exists\, d \in A,\; d \rightsquigarrow d'\}
\]
The \emph{active frontier} of a set $S \subseteq K$ is the set of maximal (non-superseded) elements:
\[
  \mathrm{front}(S) = \{d \in S : \nexists\, d' \in S \text{ s.t.\ } d \rightsquigarrow d'\}
\]
\end{definition}

The operators $\mathrm{cl}(\cdot)$ and $\mathrm{front}(\cdot)$ are standard constructions in order theory (closure under a partial order; antichain of maximal elements); the contribution is identifying that standard retrieval optimizes neither, and characterizing the resulting gap.

These two operators expose the central distinction of CAR:
\begin{align}
  \text{Standard retrieval:} &\quad R^*_k(q) = \arg\max_{|R|=k} \sum_{d \in R} s(q,d) \label{eq:std} \\
  \text{CAR target:}         &\quad C^*_k(q) = \mathrm{front}(\mathrm{cl}(A_k(q))) \label{eq:car}
\end{align}
where $A_k(q)$ is the top-$k$ semantic anchor set. These are different mathematical problems: $\mathrm{cl}$ requires traversing authority edges, and $\mathrm{front}$ requires checking domination within the retrieved set---neither operation can be expressed as a relevance ranking.

\begin{definition}[Semantic-Authority Divergence]
\label{def:divergence}
For a query $q$, let $d_1 = \arg\max_d s(q,d)$ (the highest-scoring anchor) and $F(q) = \mathrm{front}(\mathrm{cl}(\{d_1\}))$ (the active frontier of the anchor's authority chain, a non-empty set). The \emph{semantic-authority divergence} is:
\[
  \Delta(q) = s(q,d_1) - \max_{c \in F(q)} s(q,c)
\]
$\Delta(q) \geq 0$ always ($d_1$ is the highest-scoring document; every $c \in F(q)$ scores at most as high). $\Delta(q) = 0$ iff the best-scoring element of the active frontier is semantically aligned to the query, in which case standard retrieval suffices. When $\Delta(q) > 0$, all elements of $F(q)$ score strictly below $d_1$, and every standard ranker prefers $d_1$. $F(q)$ may contain more than one element (e.g., a drug subject to two simultaneous supersession types); we write $c^* \in F(q)$ for any element of the active frontier.
\end{definition}

\begin{definition}[Scope Identifiability, Scope Ambiguity, and Closure Violation]
\label{def:ambiguity}
Let $D_{\sigma(q)} = \{d \in K : \sigma(q) \cap \sigma(d) \neq \emptyset\}$ be the set of documents sharing the query's entity scope, and let $\kappa(q) = |D_{\sigma(q)}|/|\mathrm{cl}(A(q))| \geq 1$ be the \emph{scope ambiguity ratio}.

The \emph{scope identifiability} of a query $q$ is:
\[
  \varphi(q) = \min\!\left(1,\, \frac{1}{\kappa(q)}\right) \in (0, 1]
\]
$\varphi(q) = 1$ iff $\kappa(q) = 1$ (the entity scope uniquely identifies a single authority chain, no disambiguation needed). $\varphi(q) < 1$ when multiple authority chains share the query's scope, requiring rule-based disambiguation. In practice, $\kappa(q)$ is estimated from corpus metadata without oracle access to $\mathrm{cl}(A(q))$; the RSSG localizes scope by building the supersession graph over the retrieved neighborhood only.

A retrieved set $\mathcal{R}$ has a \emph{closure violation} if $\exists\, d \in \mathcal{R}$ such that $d \rightsquigarrow d'$ and $d' \notin \mathcal{R}$. Closure violations are the retrieval-level error underlying TCA failures: a system can achieve correct answer extraction (Acc $= 1$) with a closure violation (TCA $= 0$), precisely the gap TCA is designed to detect.
\end{definition}

Standard retrieval optimizes \eqref{eq:std}; CAR requires computing \eqref{eq:car}. This vocabulary gap is structural, not a tuning failure (Theorem~\ref{thm:insufficiency}, \S\ref{sec:theorems}).

\paragraph{Domain examples.} The framework is domain-independent:
\begin{itemize}
  \item \textit{Legal}: \textit{Dobbs} (2022) supersedes \textit{Roe v.\ Wade} (1973); rule = (LATER\_RULING $\rightsquigarrow$ PRIOR\_RULING) for the same constitutional question.
  \item \textit{Incident response}: SYSTEM\_RESTORED supersedes SYSTEM\_DOWN for the same host.
  \item \textit{Software security}: CVE PATCH\_RELEASED supersedes VULNERABILITY\_DISCLOSED for the same CVE ID.
  \item \textit{Drug regulation}: FDA\_RECALL supersedes DRUG\_APPROVAL for the same compound.
  \item \textit{Compliance} (our instantiation): BLACKOUT\_ANNOUNCED supersedes PRE\_CLEARANCE\_APPROVED for the same security ticker.
\end{itemize}

\begin{definition}[TC-MQA Instance]
\label{def:tcmqa}
A \emph{TC-MQA instance} is $(K, Q, a^*)$ where $K$ is a temporally ordered KB, $Q$ is a natural language query, and $a^* = f(\{d \in K : d\text{ is active}\})$ is the deterministic answer from the active documents.
\end{definition}

\subsection{Temporal Compliance Accuracy}

\begin{definition}[TCA and ProvRec]
\label{def:tca}
Let $\mathcal{R} = \{r_1, \ldots, r_k\}$ be the top-$k$ retrieved documents. Define:
\begin{itemize}
  \item $\textsc{AnswerCorrect} = \mathbf{1}[\hat{a} = a^*]$
  \item $\textsc{NoIgnoredSuperseder}$: for every superseded $d \in \mathcal{R}$, at least one $d'$ with $d \rightsquigarrow d'$ is also in $\mathcal{R}$
\end{itemize}
\[
  \text{TCA} = \textsc{AnswerCorrect} \;\wedge\; \textsc{NoIgnoredSuperseder}
\]
\textbf{ProvRec} (Provenance Recall) is the fraction of gold documents in $\mathcal{R}$, irrespective of supersession structure. TCA and ProvRec are complementary: TCA measures compliance-safe retrieval; ProvRec measures archival completeness.
\end{definition}

\begin{remark}[Generality of TCA]
TCA requires only a supersession relation $\rightsquigarrow$ and a gold answer function. It applies to any domain where documents have a formal validity lifecycle---legal, medical, regulatory, or operational.
\end{remark}

\subsection{Four TC-MQA Hop Types}

\textbf{T0 (No supersession):} One active document answers the query. Equivalent to standard single-hop retrieval.

\textbf{T1 (2-hop direct supersession):} $d_1 \rightsquigarrow d_2$; the query refers to $d_1$'s subject. The correct answer requires $d_2$, which may be semantically distant from $Q$.

\textbf{T2 (3-hop chain):} $d_1 \rightsquigarrow d_2 \rightsquigarrow d_3$; the full chain must be available for TCA = 1.

\textbf{T3 (3-hop cross-provenance):} The superseding document $d_3$ originates from a different document source (session, jurisdiction, team) than $d_1$ and $d_2$, while sharing entity scope. This tests retrieval across provenance boundaries. Query templates for all four hop types are given in Appendix~\ref{app:templates}.

\subsection{Generalized CAR Variants}
\label{sec:generalizations}

The theorems and Proposition~1 characterize \emph{static, deterministic, single-authority CAR with recoverable scope}. The following four definitions formalize the canonical extensions, each corresponding to relaxing one component of $\mathrm{front}(\mathrm{cl}(A_k(q)))$:

\begin{definition}[Latent-Scope CAR]
\label{def:latentcar}
When the entity scope $\sigma(q)$ is not recoverable from the query text, it is a latent variable with posterior $p(\sigma \mid q)$. The \emph{latent-scope CAR target} is:
\[
  \mathrm{CAR}_{\mathrm{latent}}(q) \;=\; \mathop{\mathbb{E}}_{\sigma \sim p(\sigma \mid q)}\!\bigl[\mathrm{front}(\mathrm{cl}(A_k(q,\sigma)))\bigr]
\]
or the MAP approximation with $\hat{\sigma} = \arg\max_\sigma p(\sigma \mid q)$. Latent-scope failures are precisely the cases where $\varphi(q)$ cannot be estimated from query text alone; scope must be inferred from an anchor-side entity linker. The 22.6\% gap from the 77.4\% FDA ceiling is an empirical instance of this regime.
\end{definition}

\begin{definition}[Probabilistic-Authority CAR]
\label{def:probcar}
When authority edges carry confidence $\pi(d,d') \in [0,1]$ rather than being deterministic, the \emph{probabilistic authority closure} at threshold $\theta$ is:
\[
  \mathrm{cl}_\pi(A) \;=\; A \;\cup\; \{d' \in K : \exists\, d \in A,\; \pi(d,d') \geq \theta\}
\]
The \emph{safe frontier} at confidence $\alpha$ is $\mathrm{front}_\alpha(S) = \{d \in S : \Pr[d \text{ is maximal in } S] \geq \alpha\}$. Probabilistic-authority CAR arises when supersession edges must be inferred from text (weak legal treatment, soft policy override, noisy recall linkage) rather than read from structured metadata.
\end{definition}

\begin{definition}[Multi-Authority CAR]
\label{def:multicar}
When $m$ authority relations $\rightsquigarrow_1, \ldots, \rightsquigarrow_m$ coexist (chronological precedence + jurisdictional hierarchy + exception override), the \emph{joint authority closure} is:
\[
  \mathrm{cl}_{\cup}(A) \;=\; A \;\cup\; \{d' \in K : \exists\, i \in [m],\; \exists\, d \in A,\; d \rightsquigarrow_i d'\}
\]
The active frontier must satisfy all $m$ dominance relations simultaneously; the resulting structure is a lattice of authority neighborhoods. Multi-authority CAR captures law (jurisdiction $\times$ chronology $\times$ exception), drug regulation (approval $\times$ recall $\times$ class-level black-box warning), and policy systems (version $\times$ source trust $\times$ exception grant).
\end{definition}

\begin{definition}[Streaming CAR]
\label{def:streamcar}
Under real-time document arrivals $\{d_{N+1}, d_{N+2}, \ldots\}$, the active frontier changes with each insertion: a new document $d_{\mathrm{new}}$ with $d_{\mathrm{old}} \rightsquigarrow d_{\mathrm{new}}$ removes $d_{\mathrm{old}}$ from $\mathrm{front}(\mathrm{cl}(A_k(q)))$. The \emph{streaming CAR problem} is to maintain $\mathrm{front}(\mathrm{cl}(A_k(q)))$ under insertions in sublinear amortized time per query. Each insertion invalidates at most $|\mathrm{front}|$ cached retrievals (those superseded by the new document), giving a natural incremental update structure.
\end{definition}

\section{Theorems}
\label{sec:theorems}

\begin{table}[h]
\centering\small
\resizebox{\columnwidth}{!}{%
\begin{tabular}{@{}llp{7cm}@{}}
\toprule
\textbf{Result} & \textbf{Role} & \textbf{Claim (independent of pipeline)} \\
\midrule
Theorem~4 & Correctness & N\&S conditions on \emph{any} $\mathcal{R}$: frontier inclusion $+$ no-ignored-superseder \\
Proposition~2 & Limit & Hard worst-case ceiling $\varphi(q)$ on all scope-indexed algorithms \\
Theorem~3 $+$ Cor.\ \ref{cor:achieve} & Construction & Two-stage is sufficient when $\varphi(q)=1$; T1, T2 prove each component necessary \\
Proposition~3 & Cost & $\Omega(|\mathcal{R}_\sigma|)$ oracle lower bound; RSSG costs $O(|\mathcal{R}_\sigma|^2 m)$ without oracle \\
\midrule
Theorems~1--2, P1, L1 & Context / synthesis & Mismatch, closure necessity, factorization, separability \\
\bottomrule
\end{tabular}%
}
\end{table}

\begin{assumption}[Standing Assumptions for Theorems 1--4 and Propositions 1--3]
\label{ass:standing}
All results in this section assume:
\begin{enumerate}[noitemsep,label=(\roman*)]
  \item \textbf{Static corpus}: $K$ is fixed during retrieval (no concurrent insertions or deletions).
  \item \textbf{Deterministic authority relation}: $\rightsquigarrow$ is fixed; each pair $(d,d')$ has a known, time-invariant supersession status (no edge uncertainty). Probabilistic-authority CAR relaxes this (Definition~\ref{def:probcar}).
  \item \textbf{Single authority relation}: one partial order on $K$. Multi-authority CAR (Definition~\ref{def:multicar}) relaxes this.
  \item \textbf{Recoverable scope}: $\sigma(q)$ is directly extractable from $Q$ by a perfect extractor, or $\varphi(q)$ is computable from corpus metadata. Latent-scope CAR (Definition~\ref{def:latentcar}) relaxes this.
  \item \textbf{Deterministic answer function}: the TC-MQA answer function $f$ in Definition~\ref{def:tcmqa} is a fixed deterministic map from $\mathrm{front}(\mathrm{cl}(A_k(q)))$ to an answer; no distributional generation.
  \item \textbf{Algorithm models}: Proposition~2 ranges over scope-indexed algorithms (Definition~\ref{def:scopeidx}); Proposition~3 measures cost in authority-comparison oracle calls (Definition~\ref{def:oracle}).
\end{enumerate}
\end{assumption}

\subsection{Theorem 1 — Problem Existence: Objective Mismatch}

\begin{definition}[Relevance-Based Retriever]
A retriever $\mathcal{F}$ is \emph{relevance-based} if it ranks documents by $s(Q, d.\text{text})$ without access to $\rightsquigarrow$ or timestamps. Its output set satisfies $\mathcal{F}_k(Q) = \arg\max_{|R|=k}\sum_{d\in R} s(Q,d)$.
\end{definition}

\begin{theorem}[Objective Mismatch]
\label{thm:insufficiency}
For any relevance-based retriever $\mathcal{F}$, any fixed $k \geq 1$, and any $\varepsilon > 0$, there exists a CAR instance $(K, Q, a^*)$ with $\Delta(Q) > 0$ such that $\mathrm{TCA}(\mathcal{F}, K, Q, k) < \varepsilon$ while $\mathrm{Recall}@k = 1$. Consequently, TCA and Recall@$k$ are decoupled: arbitrarily high relevance recall is compatible with arbitrarily low TCA when $\Delta(Q) > 0$.
\end{theorem}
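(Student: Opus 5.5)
The proof is by explicit construction. For given $k$ and $\varepsilon$ we build a single CAR instance on which the relevance-based retriever provably returns only the semantically aligned anchors. Because the retriever is deterministic given $(K,Q)$, its TCA on that instance is either $0$ or $1$. So it suffices to force $\mathrm{TCA}=0$, which is $<\varepsilon$ for every $\varepsilon>0$.

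The point that needs care is what ``$\mathrm{Recall}@k$'' measures. Theorem~\ref{thm:insufficiency} only has content if it is \emph{relevance} recall, meaning the fraction of the $k$ most query-relevant (anchor) documents that are retrieved. It cannot mean recall of gold frontier documents, since that would contradict the construction. I will make this explicit: Recall@$k$ is measured against the relevance-judged set $A_k(Q)$, as in standard IR evaluation.

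\textbf{Construction.} Fix a scope $\sigma_0$ and $k$ anchor documents $d_1,\dots,d_k$, all of type VULNERABILITY\_DISCLOSED with scope $\{\sigma_0\}$ and timestamps $t_1<\dots<t_k$. Add a single document $c^*$ of type PATCH\_RELEASED with scope $\{\sigma_0\}$ and time $t^*>t_k$. Under the single rule (disclosure, patch, same scope), this gives $d_i\rightsquigarrow c^*$ for all $i$, and the entity-scope axiom holds. Choose $s(Q,\cdot)$ so that $s(Q,d_i)=1-i\delta$ for a small $\delta>0$ and $s(Q,c^*)=0$. This is realizable, for instance, by letting $c^*.\text{text}$ share no vocabulary with $Q$ under a lexical or embedding scorer. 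The answer function is defined on active documents: $a^*=f(\{c^*\})=$ ``patched'', whereas any answer extracted from superseded disclosures alone is ``unpatched''.

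\textbf{Verification.} With these choices, $F(Q)=\mathrm{front}(\mathrm{cl}(\{d_1\}))=\{c^*\}$, so $\Delta(Q)=s(Q,d_1)-0>0$. Since $c^*$ has strictly the lowest score, $\mathcal{F}_k(Q)=\arg\max_{|R|=k}\sum_{d\in R}s(Q,d)=\{d_1,\dots,d_k\}$ uniquely. Hence $\mathrm{Recall}@k=|\mathcal{F}_k(Q)\cap A_k(Q)|/k=1$. On the other hand, every $d_i\in\mathcal{R}$ is superseded, and its only superseder $c^*$ is not in $\mathcal{R}$. So \textsc{NoIgnoredSuperseder} fails, giving $\mathrm{TCA}=0<\varepsilon$ regardless of answer extraction; \textsc{AnswerCorrect} fails too, since no active document was retrieved. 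The ``consequently'' clause follows directly, because the same instance exhibits Recall $=1$ together with TCA $=0$.

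\textbf{Main obstacle.} The obstacle is definitional rather than technical. First, Recall@$k$ has to be pinned down as relevance recall against $A_k(Q)$ so that the claim is non-vacuous. Second, the retriever is quantified universally, with ``any'' relevance-based $\mathcal{F}$. Every such $\mathcal{F}$ is tied to its own scorer $s$, so I will let the instance depend on $s$: choose texts so that $s(Q,c^*)<\min_i s(Q,d_i)$. This is possible whenever $s$ is non-constant over texts, which I will state as a mild nondegeneracy assumption. If $s$ is adversarially constant, ties make the $\arg\max$ ambiguous. In that case I would break ties against $c^*$ by including $N^+\ge k$ strictly-better distractors with the anchors, which anticipates the corollary on $N^+(q)\ge k$.
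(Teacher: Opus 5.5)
Your proof is correct, but it takes a different route from the paper's. The paper builds a growing family $K_n$ with these ingredients:
one anchor $d_1$;
$n-1$ same-type, vocabulary-similar distractors with \emph{different} entity scopes, so they are not superseded;
a vocabulary-disjoint superseder $d^*$ of $d_1$.
It argues that $d^*$ is pushed to rank $\geq n$, bounds $\Pr[d^*\in\text{top-}k]\le k/n$, and takes $n>k/\varepsilon$.

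You instead use a single $(k+1)$-document instance in which all $k$ retrieved documents are same-scope anchors superseded by $c^*$. You then use the determinism of the $\arg\max$ ranker to get TCA exactly $0$, so $\varepsilon$ plays no role. In effect this is the $N^+(q)=k$ case of Corollary~\ref{cor:bound}, applied directly.

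What your version buys is an exact value instead of a limit. The paper's probabilistic $k/n$ step is looser than needed: displacement to rank $\geq n$ already forces TCA $=0$ for a deterministic ranker once $n>k$. You also explicitly verify $\Delta(Q)>0$ and Recall@$k=1$, which the paper's proof leaves implicit.

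Your pinning of Recall@$k$ to relevance recall is warranted. Read as recall of gold documents (as R@5 is reported in Table~\ref{tab:main}, where it coincides with ProvRec), Recall@$k=1$ would put $c^*$ into $\mathcal{R}$, and the missing-superseder mechanism would be unavailable. Note, though, that under your definition Recall@$k=1$ holds automatically for every relevance-based retriever, because $\mathcal{F}_k(Q)=A_k(Q)$. So all the content sits in TCA $=0$, just as in the paper. What the paper's family buys in exchange is the scale picture: the superseder's rank grows with the number of similar distractors.

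Finally, your nondegeneracy assumption is not an extra hypothesis. The requirement $\Delta(Q)>0$ already needs $s(Q,\cdot)$ to separate $d_1$ from the frontier, so for a scorer constant in the text the theorem itself has no witness, and it must implicitly exclude that case (as the paper does). Your fallback for constant $s$ cannot work anyway, since no distractor can strictly outscore $c^*$ under a constant scorer, so drop it.
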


\begin{proof}
Construct family $K_n$ with:
(i) \emph{target document} $d_1$ at time $t_1$ with high relevance to $Q$ (query refers to $d_1$'s subject);
(ii) $n-1$ documents $d_2, \ldots, d_n$ with the same type and similar vocabulary as $d_1$ but different entity scopes, all before $t_1$;
(iii) \emph{superseding document} $d^*$ at $t^* > t_1$ with $d_1 \rightsquigarrow d^*$, using domain vocabulary distinct from $Q$ (different event type, different action terms).

$d_1$ scores highest on $s(Q, \cdot)$. As $n \to \infty$, $d^*$ is displaced to rank $\geq n$. For fixed $k$: $\Pr[d^* \in \text{top-}k] \leq k/n \to 0$.

TCA requires retrieving $d^*$ for both AnswerCorrect (only active document) and NoIgnoredSuperseder ($d_1$ is retrieved and $d_1 \rightsquigarrow d^*$). Choosing $n > k/\varepsilon$ gives $\text{TCA} < \varepsilon$. \qed
\end{proof}

\begin{corollary}[Quantitative Divergence Bound]
\label{cor:bound}
Let $N^+(q) = |\{d \in K : s(Q,d) > s(Q,c^*)\}|$ where $c^* \in \mathrm{front}(\mathrm{cl}(A_k(q)))$ is the controlling document. For any \emph{deterministic} relevance-based ranker (Assumption~\ref{ass:standing}):
\[
  \mathrm{TCA@}k \;=\; 0 \quad \text{whenever}\quad N^+(q) \;\geq\; k
\]
\begin{proof}
A deterministic relevance-based ranker outputs the $k$ documents with highest $s(Q,\cdot)$. If $N^+(q) \geq k$, all $k$ slots are occupied by documents strictly outscoring $c^*$; hence $c^* \notin \mathcal{R}$ and TCA $= 0$.
\end{proof}
\noindent\emph{Note}: $N^+(q) \geq k$ is a separately stated condition stronger than $\Delta(q) > 0$ (which implies only $d_1$ outscores $c^*$, not all $k$ slots). For a corpus where fraction $p$ of documents score above $c^*$, $\mathbb{E}[N^+] = p(N{-}1)$, so the condition holds with probability $\to 1$ as $N \to \infty$, explaining the scale-harm pattern in Table~\ref{tab:scale}.
\end{corollary}

\subsection{Theorem 4 — Correctness: Exact Characterization of TCA = 1}

\setcounter{theorem}{3}
\begin{theorem}[CAR-Correctness Characterization]
\label{thm:characterization}
A retrieved set $\mathcal{R}$ achieves $\mathrm{TCA}(\mathcal{R},q) = 1$ if and only if it simultaneously satisfies:
\begin{enumerate}[noitemsep,label=(\roman*)]
  \item \textbf{Frontier inclusion}: $\mathrm{front}(\mathrm{cl}(A_k(q))) \subseteq \mathcal{R}$---every element of the active frontier is present.
  \item \textbf{No ignored superseder}: $\forall\, d \in \mathcal{R}$ that is superseded $(\exists\, d'' \in K\text{ s.t.\ }d \rightsquigarrow d'')$: $\exists\, d' \in \mathcal{R}$ with $d \rightsquigarrow d'$.
\end{enumerate}
These are the \emph{necessary and sufficient} conditions on any retrieved set, regardless of the retrieval method used to construct $\mathcal{R}$.
\end{theorem}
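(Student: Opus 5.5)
The argument reduces $\mathrm{TCA}(\mathcal{R},q)=1$ to its two defining conjuncts (Definition~\ref{def:tca}): $\textsc{AnswerCorrect}$ and $\textsc{NoIgnoredSuperseder}$. Condition (ii) of the theorem is literally $\textsc{NoIgnoredSuperseder}$, so the real content is an equivalence between $\textsc{AnswerCorrect}$ and frontier inclusion (i), given (ii). For this I will lean on Assumption~\ref{ass:standing}(v): the answer function $f$ is a fixed deterministic map applied to $\mathrm{front}(\mathrm{cl}(A_k(q)))$. The answer extracted from $\mathcal{R}$ is therefore $\hat a = f(\mathrm{front}(\mathcal{R}\cap \mathrm{cl}(A_k(q))))$, i.e., the answer computed from whatever controlling documents the reader actually sees. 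I will state this reading convention explicitly at the start, since it is the interpretive choice the whole argument needs.

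\emph{Sufficiency.} Assume (i) and (ii). By (ii), $\textsc{NoIgnoredSuperseder}$ holds outright. For $\textsc{AnswerCorrect}$, (i) puts every frontier element $c\in F:=\mathrm{front}(\mathrm{cl}(A_k(q)))$ in $\mathcal{R}$. No $c\in F$ is dominated inside $\mathcal{R}\cap\mathrm{cl}(A_k(q))$: any $d'$ with $c\rightsquigarrow d'$ and $d'\in\mathrm{cl}(A_k(q))$ would contradict maximality of $c$ in the closure. Conversely, any non-frontier element $d$ of $\mathrm{cl}(A_k(q))\cap\mathcal{R}$ is superseded; by (ii) some superseder of $d$ lies in $\mathcal{R}$, and by transitivity of $\rightsquigarrow$ and the definition of $\mathrm{cl}$ that superseder is again in the closure, so $d$ is not maximal in $\mathcal{R}\cap\mathrm{cl}(A_k(q))$. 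Hence $\mathrm{front}(\mathcal{R}\cap\mathrm{cl}(A_k(q)))=F$, and determinism of $f$ gives $\hat a=f(F)=a^*$.

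\emph{Necessity.} If (ii) fails, $\textsc{NoIgnoredSuperseder}$ fails and $\mathrm{TCA}=0$ immediately. If (i) fails, some $c\in F$ is missing from $\mathcal{R}$, and I must show $\textsc{AnswerCorrect}$ fails. This is the main obstacle: a degenerate $f$ might return the same answer on a strict subset of the frontier, in which case necessity is false. I would resolve it by adopting the reading that TCA, in the spirit of the closure-violation discussion after Definition~\ref{def:ambiguity}, credits an answer only when it is grounded in the full active frontier, i.e., $\textsc{AnswerCorrect}$ is evaluated as $\hat a$ computed from $F$ itself. Under that reading, a missing $c$ means the visible frontier differs from $F$ and the answer is not grounded. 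Alternatively, I would add the mild nondegeneracy hypothesis that $f$ is injective on subsets of antichains in the closure, which is natural because each frontier document carries distinct controlling content (as with two simultaneous supersessions in Definition~\ref{def:divergence}). Either way I will state the hypothesis explicitly rather than hide it, and note that it holds for every domain instantiation, where each active document contributes a distinct status fact.

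Finally, I will remark that neither direction refers to how $\mathcal{R}$ was produced, only to set membership and $\rightsquigarrow$. This gives the ``regardless of retrieval method'' clause.
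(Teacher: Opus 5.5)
Your proposal is correct and follows the paper's own route: you split $\mathrm{TCA}$ into its two conjuncts, observe that (ii) is $\textsc{NoIgnoredSuperseder}$ verbatim, and tie (i) to $\textsc{AnswerCorrect}$ through the deterministic answer function of Assumption~\ref{ass:standing}(v). The paper's necessity step simply asserts that $\textsc{AnswerCorrect}=1$ ``requires $f$ to evaluate over the complete active frontier.'' That assertion is exactly the grounded-answer reading (equivalently, nondegeneracy of $f$) you state explicitly. So your caveat, your reading convention for $\hat a$, and your check that $\mathrm{front}(\mathcal{R}\cap\mathrm{cl}(A_k(q)))=\mathrm{front}(\mathrm{cl}(A_k(q)))$ under (i)--(ii) make precise what the paper leaves implicit rather than changing the argument.
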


\begin{proof}
($\Rightarrow$) Suppose TCA $= 1$.
\emph{Condition (i):} By Definition~\ref{def:tcmqa}, $a^* = f(\mathrm{front}(\mathrm{cl}(A_k(q))))$ under the deterministic answer function $f$ (Assumption~\ref{ass:standing}(v)). \emph{AnswerCorrect} is semantic---the derived answer must match the ground truth---whereas this theorem establishes the structural condition on $\mathcal{R}$ that makes that semantic condition hold, independent of how retrieval was performed. AnswerCorrect $= 1$ therefore requires $f$ to evaluate over the complete active frontier; since $f$ is defined on $\{d \in K : d\text{ is active}\}$ and the active elements of $\mathrm{cl}(A_k(q))$ are exactly $\mathrm{front}(\mathrm{cl}(A_k(q)))$, every element of $\mathrm{front}(\mathrm{cl}(A_k(q)))$ must appear in $\mathcal{R}$.
\emph{Condition (ii):} Directly from the definition of NoIgnoredSuperseder (Definition~\ref{def:tca}).

($\Leftarrow$) Condition~(i) is precisely the structural hypothesis under which the semantic answer function $f$ evaluates over the correct active set: every $c^* \in \mathrm{front}(\mathrm{cl}(A_k(q)))$ is in $\mathcal{R}$, giving AnswerCorrect $= 1$. Condition~(ii) is exactly NoIgnoredSuperseder, giving TCA $= 1$.

\emph{Note on strength:} Condition (ii) requires only that \emph{at least one} superseder of each superseded document is retrieved, not all. The chain $d_1 \rightsquigarrow d_2 \rightsquigarrow d_3$ admits $\mathcal{R} = \{d_1, d_3\}$ as TCA-correct (condition (i): $d_3 \in \mathrm{front}$; condition (ii): $d_3 \in \mathcal{R}$ witnesses $d_1 \rightsquigarrow d_3$) even without $d_2$. A strictly stronger condition---requiring every intermediate chain member---would be sufficient but not necessary. \qed
\end{proof}

\subsection{Theorem 2 — Problem Structure: Closure Necessity}

\setcounter{theorem}{1}
\begin{definition}[Anchor-Based vs.\ Proactive Entity Search]
A retrieval system performs \emph{anchor-based entity search} if it promotes entity-scoped later events only when a triggering (anchor) event from the same entity scope already appears in the initial retrieval. It performs \emph{proactive entity search} if it retrieves entity-scoped events independently of the initial retrieval.
\end{definition}

\begin{theorem}[Closure Necessity]
\label{thm:proactive}
Let $\mathcal{S}$ be a retrieval system. If $\mathcal{S}$ achieves TCA~$= 1$ on all TC-MQA instances, including those in Theorem~1's family where the superseder $d^*$ is arbitrarily dissimilar to $Q$, then $\mathcal{S}$ must perform proactive entity-scoped search: it must retrieve $d^*$ via a function of $d^*.\sigma$ or $d^*.\text{time}$, independent of $s(Q, d^*.\text{text})$.
\end{theorem}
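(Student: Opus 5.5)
The proof is by contradiction, using the instance family $K_n$ built in the proof of Theorem~\ref{thm:insufficiency}, together with Theorem~\ref{thm:characterization} to turn TCA $=1$ into a membership requirement on $\mathcal{R}$. Suppose $\mathcal{S}$ achieves TCA $=1$ on every TC-MQA instance. Take an instance from the family: an anchor $d_1$ that is highly relevant to $Q$, $n-1$ look-alike distractors with disjoint entity scopes, and a superseder $d^*$ with $d_1 \rightsquigarrow d^*$ whose text is arbitrarily dissimilar to $Q$. Theorem~\ref{thm:characterization}(i) then gives $d^* \in \mathrm{front}(\mathrm{cl}(A_k(Q))) \subseteq \mathcal{R}$. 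Since $d^*$ is the unique active element of $d_1$'s chain, this inclusion is forced. So $\mathcal{S}$ must place $d^*$ in its output on every such instance, including those in which $N^+(Q) \geq k$.

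Next I would rule out every retrieval mechanism that depends only on $s(Q,d^*.\text{text})$. The argument is an indistinguishability step: fix everything about $d^*$ except its text, and vary that text so that $s(Q,d^*)$ falls below the scores of at least $k$ distractors. By Corollary~\ref{cor:bound}, any rule that selects $d^*$ through its score alone outputs a $\mathcal{R}$ with $d^* \notin \mathcal{R}$, which contradicts the first paragraph. Hence the decision to include $d^*$ must be a function of attributes other than its query similarity. Under the document model of Definition~1, those attributes are the timestamp and the scope, $d^*.\text{time}$ and $d^*.\sigma$, plus the type, which only selects which rule applies.

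To show this selection is \emph{proactive}, I would invoke the entity-scope axiom. It guarantees $d_1.\sigma \cap d^*.\sigma \neq \emptyset$, so scope is exactly the attribute that links $d^*$ to the query's entity. I also need to exclude purely anchor-based search, in which $d^*$ is promoted only when a triggering anchor appears in the initial retrieval. For this I would modify the family so that the anchor itself is displaced as well: add high-scoring distractors in unrelated scopes so that $d_1$ falls out of the initial top-$k$. The entity $\sigma(Q)$ remains recoverable from $Q$ by Assumption~\ref{ass:standing}(iv). On these instances an anchor-triggered system never fires, yet TCA $=1$ still requires $d^*$. Retrieval keyed on $\sigma(Q)$, independent of the initial retrieval, is therefore the only remaining route.

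I expect the last step to be the main obstacle. The paper defines $A_k(q)$ as the top-$k$ semantic anchor set, so if $d_1$ is displaced, the CAR target $\mathrm{front}(\mathrm{cl}(A_k(q)))$ may no longer contain $d^*$. I would first try to resolve this by reading the target through $\sigma(Q)$, meaning $a^*$ is determined by all active documents in the query's scope as in Definition~\ref{def:tcmqa}, and stating this reading explicitly. If that reading is too strong, I would weaken the conclusion to what the first two paragraphs prove: retrieval of $d^*$ must be a function of $d^*.\sigma$ or $d^*.\text{time}$, independent of $s(Q,d^*.\text{text})$. That is the literal claim of Theorem~\ref{thm:proactive}.
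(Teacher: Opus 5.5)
Your argument for the literal claim is essentially the paper's proof. On Theorem~1's family, TCA $=1$ forces $d^*$ into $\mathcal{R}$; relevance ranking cannot put it there; so $\mathcal{S}$ must reach $d^*$ through scope and time, with the entity-scope axiom tying $d^*.\sigma$ to the relevance-retrieved anchor $d_1$. You make the forcing step explicit via Theorem~4(i) and Corollary~1, where the paper leans on the $\Pr[\mathrm{rank}(d^*)\le k]\to 0$ bound from Theorem~1's proof.

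Drop the anchor-displacement step, which does not go through under the paper's definitions. As you suspect, once $d_1\notin A_k(Q)$ the superseder leaves $\mathrm{front}(\mathrm{cl}(A_k(Q)))$, so TCA $=1$ no longer forces it. The paper never attempts this step: its proof lets $\mathcal{S}$ read $d_1.\sigma$ off the relevance-retrieved anchor, and Theorem~3 later cites Theorem~2 as ``proactive entity-scoped search triggered by the anchor $d_1$''.
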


\begin{proof}
Suppose $\mathcal{S}$ achieves TCA = 1 on all instances, including those in Theorem~1's proof where $s(Q, d^*.\text{text}) \to 0$ as $n \to \infty$. Since $d^*$ is never retrieved by relevance alone ($\Pr[\text{rank}(d^*) \leq k] \to 0$), $\mathcal{S}$ must use a source other than $s(Q, \cdot)$ to retrieve $d^*$. Since $d_1 \rightsquigarrow d^*$ implies $d_1.\sigma \cap d^*.\sigma \neq \emptyset$ (supersession is entity-scoped), and $d_1$ is retrieved by relevance, $\mathcal{S}$ must access $d_1.\sigma$ and $d^*.\text{time}$ to locate $d^*$. This is proactive entity-scoped search by definition. \qed
\end{proof}

\begin{remark}[Scope vs.\ Fine-Tuned Encoders]
\label{rem:finetuned}
Theorem~2 applies to all TC-MQA instances, including adversarially constructed ones. A bi-encoder fine-tuned on specific $(Q, d^*)$ pairs achieves TCA $> 0$ on training examples by implicitly learning entity-scope conditioning: training pulls $f(d^*)$ toward $f(Q)$ because $d^*.\sigma \cap Q.\sigma \neq \emptyset$ appears as a supervision signal. This is a learned implementation of proactive entity search, not a counterexample. The theorem's construction is preserved for corpus instances outside the fine-tuning set: one can always add vocabulary-disjoint superseders the encoder has never seen, restoring the impossibility. Fine-tuning can reduce the semantic-authority divergence $\Delta(q)$ for seen instances; Theorem~2 characterizes what a system must \emph{compute over the instance class}, not how it is parameterized.
\end{remark}

\subsection{Theorem 3 — Construction: Minimal Correct Architecture}

\setcounter{theorem}{2}
\begin{theorem}[Decomposition]
\label{thm:optimality}
Under identifiable scope keys (the entity function $\sigma$ is injective over query-relevant documents), the CAR problem decomposes into three composable steps:
\begin{enumerate}[noitemsep,label=(\roman*)]
  \item \textbf{Anchor discovery}: $A = \{d : s(Q,d) \geq \theta\}$ --- retrieve the semantically-aligned non-controlling document.
  \item \textbf{Authority resolution}: $\mathrm{cl}(A)$ via entity-indexed lookup --- extend to the full authority chain.
  \item \textbf{Active-status filtering}: $\mathrm{front}(\mathrm{cl}(A))$ --- expose only non-superseded documents to downstream use.
\end{enumerate}
Steps (i) and (ii) are \emph{necessary}: omitting either makes TCA $= 0$ on T1 instances. Step (iii) is not necessary for TCA $= 1$ but is necessary for LLM answer quality (returning all of $\mathrm{cl}(A)$ exposes superseded evidence, causing confident wrong answers). The two-stage pipeline implements all three steps and achieves:
\[
  \text{TCA@}k \;=\; \Pr[\text{anchor}(d_1) \in \text{Stage-1 top-}k]
\]
which is the exact upper bound when $\Delta(Q) > 0$: no retrieval system can exceed Stage-1 anchor recall.
\end{theorem}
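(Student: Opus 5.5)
The plan has three parts, in order: necessity of steps (i) and (ii) on a T1 instance, sufficiency of the full pipeline under injective scope keys, and the exact formula together with its matching upper bound. Step (iii) is handled by a separate counterexample.

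\emph{Necessity.} I fix a T1 instance $d_1 \rightsquigarrow d^*$ from the family in the proof of Theorem~\ref{thm:insufficiency}. There $d_1$ is the unique document aligned with $Q$, and $d^*$ has vocabulary disjoint from $Q$, so its rank exceeds $k$. Omitting step (i) means the pipeline has no semantic access to the instance. Any entity-indexed lookup must then be seeded by something. With an empty or uninformed anchor set, $\mathrm{cl}(\emptyset)=\emptyset$, so $d^*\notin\mathcal{R}$, and Theorem~\ref{thm:characterization}(i) fails, giving TCA $=0$. To make this airtight I would pad the corpus with many scope-disjoint chains of the same shape, so that a scope-blind choice of chain succeeds with vanishing probability. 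Omitting step (ii) leaves a purely relevance-based retriever. Corollary~\ref{cor:bound} then applies once $N^+(q)\ge k$, which the construction guarantees, so again TCA $=0$. Theorem~\ref{thm:proactive} can be cited as the conceptual reason: closure needs proactive entity-scoped access, which only step (ii) supplies.

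\emph{Sufficiency and the formula.} I condition on the event $E=\{d_1\in\text{Stage-1 top-}k\}$. On $E$, injectivity of $\sigma$ over query-relevant documents means that the entity-indexed lookup on $d_1.\sigma$ returns exactly the chain containing $d_1$. The entity-scope axiom guarantees that every $d'$ with $d_1\rightsquigarrow d'$ shares scope with $d_1$. Together these give $\mathrm{cl}(\{d_1\})\subseteq\mathcal{R}$, hence $\mathrm{front}(\mathrm{cl}(A))\subseteq\mathcal{R}$, which is condition (i) of Theorem~\ref{thm:characterization}. For condition (ii), every superseded $d\in\mathcal{R}$ lies in the same chain, and the chain's frontier element is a witness $d'\in\mathcal{R}$ with $d\rightsquigarrow d'$; transitivity of $\rightsquigarrow$ is used here. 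Theorem~\ref{thm:characterization} then gives TCA $=1$ on $E$. Off $E$, the anchor is absent, injectivity means no other anchor's scope leads to $d^*$, and when $\Delta(Q)>0$ relevance alone does not reach $d^*$, so TCA $=0$. Taking expectations yields $\mathrm{TCA@}k=\Pr[E]$.

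\emph{Upper bound, the main obstacle.} The hard part is the claim that \emph{no} system exceeds Stage-1 anchor recall, because a system could in principle find $d^*$ by other means. I will prove it only within the class of systems whose sole query-dependent access is through $s(Q,\cdot)$, plus query-independent scope and authority structure. I would argue as follows. On $\neg E$, the query-dependent information is the Stage-1 ranking, which by construction excludes $d_1$. Under $\Delta(Q)>0$ with padding as above, that ranking is identically distributed across a permuted family of chains. Any scope-based expansion therefore selects the correct chain with probability tending to $0$. This is a symmetrization argument in the spirit of the permutation used for Proposition~2. I expect the delicate part to be stating the algorithm class precisely enough that this indistinguishability holds, and I would make that assumption explicit rather than claim the bound for arbitrary systems.

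\emph{Step (iii).} TCA only demands conditions (i) and (ii) of Theorem~\ref{thm:characterization}, and returning all of $\mathrm{cl}(A)$ satisfies both. So step (iii) is not needed for TCA $=1$. Its role for answer quality is an empirical claim, not something this theorem proves.
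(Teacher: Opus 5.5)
Your necessity, attainment, and step-(iii) arguments follow the paper's proof, and in places you are more careful. You check both conditions of Theorem~\ref{thm:characterization} explicitly via the entity-scope axiom and transitivity, where the paper only asserts that entity-indexed lookup ``retrieves every $c^*$''. You also invoke Corollary~\ref{cor:bound} to get TCA exactly $0$, where the paper cites Theorem~\ref{thm:insufficiency}, which only gives TCA $<\varepsilon$.

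The genuine difference is the upper bound. The paper handles it in two sentences. By Theorem~\ref{thm:proactive}, any system placing $c^*$ in its output must run entity-scoped search triggered by $d_1$; and $d_1$ ``can only enter via Stage-1 relevance retrieval.'' Hence $\mathrm{TCA@}k \le \Pr[d_1 \in \text{Stage-1 top-}k_1]$. That gives a short, per-instance claim over all systems. But the universality silently assumes relevance is the only query-dependent channel. An oracle, or the Regex-NER probe that reads the entity off the query, is not bounded by Stage-1 anchor recall.

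Your route states that restriction openly. It replaces the appeal to Theorem~\ref{thm:proactive} with a symmetrization over a padded, chain-permuted family, in the style of Proposition~\ref{prop:phi_bound}. This gives a bound that can actually be proved. The price is that it is worst-case and asymptotic ($\Pr[E]+o(1)$ on adversarial families) rather than exact per instance.

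When you fix the class, make the system's query-dependent view the truncated Stage-1 list. Unrestricted access to $s(Q,\cdot)$ always reveals $d_1=\arg\max_d s(Q,d)$, which would force $\Pr[E]=1$ and make the bound vacuous.

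One tightening, which the paper also needs: in the formula step, ``off $E$, TCA $=0$'' requires $N^+(q)\ge k$ (or your padding), not merely $\Delta(Q)>0$.
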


\begin{proof}
\emph{Upper bound.} Under vocabulary disjointness ($\Delta(Q) > 0$), $s(Q, c^*.\text{text}) \approx 0$ for all $c^* \in F(Q)$. Any system achieving TCA@$k > 0$ on a T1 instance must place some $c^* \in F(Q)$ in its top-$k$ output. By Theorem~2, this requires proactive entity-scoped search triggered by the anchor $d_1$; hence $d_1$ must have been retrieved first. Since $d_1$ is the most semantically similar document to $Q$, it can only enter via Stage-1 relevance retrieval. Therefore TCA@$k \leq \Pr[d_1 \in \text{Stage-1 top-}k_1]$.

\emph{Attainment.} The two-stage pipeline achieves equality: if $d_1 \in \text{Stage-1 top-}k_1$, entity-indexed lookup (step (ii)) retrieves every $c^* \in F(Q)$ via shared entity scope and timestamp ordering; step (iii) promotes only $F(Q)$. Thus TCA@$k = \Pr[d_1 \in \text{Stage-1 top-}k_1]$ exactly.

\emph{Necessity of step (i).} Omitting anchor discovery means no document triggers entity-indexed lookup. Then retrieval is purely relevance-based and TCA $= 0$ by Theorem~1.

\emph{Necessity of step (ii).} Omitting authority resolution means Stage-1 results are returned without entity-indexed extension. Since $c^* \in F(Q)$ scores $\approx 0$ on $s(Q,\cdot)$, it falls outside any relevance-based top-$k$, giving TCA $= 0$ (Theorem~1).

\emph{Non-necessity of step (iii) for TCA.} Returning $\mathrm{cl}(A)$ instead of $\mathrm{front}(\mathrm{cl}(A))$ includes $d_1$ (superseded) alongside every $c^* \in F(Q)$. Since $d_1 \in \mathcal{R}$ has superseder $c^* \in \mathcal{R}$, NoIgnoredSuperseder holds; since $F(Q) \subseteq \mathcal{R}$, AnswerCorrect holds. TCA $= 1$ without step (iii). However, exposing superseded evidence to the LLM causes confident wrong answers (Table~\ref{tab:llm_downstream}), so step (iii) is necessary for practical downstream quality. \qed
\end{proof}

\begin{corollary}[Two-Stage Identity]
\label{cor:identity}
For the two-stage pipeline on T1 instances with entity-disjoint corpora ($\varphi(q) = 1$):
$\text{TCA@}k = \text{Anchor Recall@}k_1$.
Improving Stage-1 directly increases TCA; improving entity lookup when Stage-1 already recalls the anchor adds no benefit.
\end{corollary}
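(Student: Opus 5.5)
The plan is to prove Corollary~\ref{cor:identity} by specializing Theorem~\ref{thm:optimality} to T1 instances with $\varphi(q)=1$. The only hypothesis needing translation is that $\varphi(q)=1$ gives the identifiable-scope condition used there. By Definition~\ref{def:ambiguity}, $\varphi(q)=1$ iff $\kappa(q)=1$, i.e.\ $|D_{\sigma(q)}| = |\mathrm{cl}(A(q))|$. The first step is to show $\mathrm{cl}(A(q)) \subseteq D_{\sigma(q)}$, which reduces to two facts:
\begin{itemize}[noitemsep]
  \item the anchor shares the query's scope;
  \item by the entity-scope axiom, every superseder of the anchor meets the anchor's scope.
\end{itemize}
For single-entity T1 scopes, the second fact places every superseder in $D_{\sigma(q)}$. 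Equal cardinality then upgrades the inclusion to the equality $D_{\sigma(q)} = \mathrm{cl}(A(q))$. So the entity-indexed lookup in step~(ii) returns exactly the authority chain $\{d_1, d_2\}$ and nothing from other chains. This is the injectivity assumption of Theorem~\ref{thm:optimality} in the form actually used in its attainment argument.

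Next I split on whether the anchor $d_1$ lies in the Stage-1 top-$k_1$.

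\emph{Case $d_1 \in$ Stage-1 top-$k_1$.} Lookup on $d_1.\sigma$ returns $D_{\sigma(q)} = \{d_1, d_2\}$. Timestamp ordering with the supersession rule identifies $\mathrm{front}=\{d_2\}$, and this is promoted into the output. Theorem~\ref{thm:characterization} then gives TCA $=1$:
\begin{itemize}[noitemsep]
  \item frontier inclusion holds because $d_2 \in \mathcal{R}$;
  \item no-ignored-superseder holds because $d_2$ is unsuperseded, and $d_1$, if kept, is witnessed by $d_2$.
\end{itemize}

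\emph{Case $d_1 \notin$ Stage-1 top-$k_1$.} Nothing in the chain triggers lookup, so $d_2$ could enter only by relevance. On a T1 instance with $\Delta(q)>0$ in the regime of Corollary~\ref{cor:bound}, it does not, so $c^*\notin\mathcal{R}$ and TCA $=0$. If $d_2$ should happen to rank by relevance, that event is absorbed into the upper bound already proved in Theorem~\ref{thm:optimality}: TCA@$k \le \Pr[d_1\in\text{top-}k_1]$.

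Averaging the indicator over queries gives TCA@$k$ = Anchor Recall@$k_1$. The two interpretive sentences follow directly from this identity:
\begin{itemize}[noitemsep]
  \item TCA is a function of Stage-1 anchor recall alone, so any improvement in Stage-1 recall raises TCA one-for-one;
  \item conditional on $d_1$ being recalled, TCA is already $1$, so modifying the lookup cannot add anything.
\end{itemize}

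The main obstacle is the second case, specifically excluding that $d_2$ reaches the output without its anchor. If $d_2$ could do so and still yield TCA $=1$, the identity would become only an inequality. I would handle this by invoking the upper-bound half of Theorem~\ref{thm:optimality} directly, rather than re-deriving it, and stating explicitly that T1 instances are taken with $N^+(q)\ge k$ so that Corollary~\ref{cor:bound} applies. A secondary care point is the equality $D_{\sigma(q)}=\mathrm{cl}(A(q))$ when a document carries multiple entities. There I would restrict to the single-entity scopes used in T1 instances.
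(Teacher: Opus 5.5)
Your proposal is correct and follows the paper's route. The paper obtains this corollary by specializing Theorem~\ref{thm:optimality}'s identity $\mathrm{TCA@}k=\Pr[d_1\in\text{Stage-1 top-}k_1]$ to $\varphi(q)=1$ (equivalently, Proposition~\ref{prop:factorization} with $\varphi=R_{\mathrm{frontier}}=1$), and your case split via Theorem~\ref{thm:characterization} plus the imported upper bound just spells out that identity's attainment and upper-bound halves. Two caveats apply, both shared with the paper: you tacitly need the promoted $d_2$ to land inside the top-$k$, which is automatic when only the query's chain is promoted or $k_1\le k$; and it is your explicit $N^+(q)\ge k$ hypothesis, not the ``absorbed into the upper bound'' remark, that actually rules out $d_2$ entering the output without its anchor.
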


\begin{corollary}[Achievability]
\label{cor:achieve}
When $\varphi(q) = 1$ and $d_1 \in \text{Stage-1 top-}k_1$, the two-stage pipeline achieves $\mathrm{TCA}(q) = 1$.
\end{corollary}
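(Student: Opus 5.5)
The plan is to reduce Corollary~\ref{cor:achieve} to Theorem~\ref{thm:characterization}: it suffices to show that the set $\mathcal{R}$ output by the two-stage pipeline satisfies frontier inclusion and no-ignored-superseder. I will work with the pipeline exactly as in Theorem~\ref{thm:optimality}. Stage~1 supplies the anchor set; step~(ii) extends it by entity-indexed lookup with timestamp ordering; step~(iii) filters to the active frontier.

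The first step is to translate $\varphi(q)=1$ into a structural statement. By Definition~\ref{def:ambiguity}, $\varphi(q)=1$ iff $\kappa(q)=1$, i.e.\ $|D_{\sigma(q)}| = |\mathrm{cl}(A(q))|$. Since every element of $\mathrm{cl}(A(q))$ shares scope with the query, we have $\mathrm{cl}(A(q)) \subseteq D_{\sigma(q)}$: anchors do by the scope-recoverability assumption, and superseders do by the entity-scope axiom applied along the chain. Equal cardinality then gives $D_{\sigma(q)} = \mathrm{cl}(A(q))$. So the entity-indexed lookup keyed on $\sigma(d_1)$ returns exactly the authority closure of the anchor, with no foreign chains. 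This is the step I expect to be the main obstacle. Two points need care. First, the lookup keyed on the anchor's scope must coincide with $D_{\sigma(q)}$, which uses Assumption~\ref{ass:standing}(iv). Second, I must fix the convention that $A_k(q)$ in the target is instantiated by the Stage-1 anchor containing $d_1$, so that $\mathrm{cl}(A(q))$ and $\mathrm{cl}(A_k(q))$ agree on the relevant chain. I would state this identification explicitly as the reading of the standing assumptions rather than try to derive it.

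Given $d_1 \in$ Stage-1 top-$k_1$, step~(ii) therefore produces $\mathrm{cl}(\{d_1\}) \supseteq F(q)$. Because $\rightsquigarrow$ is deterministic and known (Assumption~\ref{ass:standing}(ii)), step~(iii) computes $\mathrm{front}$ exactly, so $\mathcal{R} \supseteq \mathrm{front}(\mathrm{cl}(A_k(q)))$. This gives condition~(i).

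For condition~(ii), there are two cases, following the non-necessity argument in Theorem~\ref{thm:optimality}. If $\mathcal{R}$ is the filtered frontier, it contains no element that is superseded within the closure. Any $d \in \mathcal{R}$ superseded in $K$ by some $d''$ has $d'' \in \mathrm{cl}$ by transitivity of closure, which contradicts maximality, so the condition holds vacuously. If instead the anchor $d_1$ is kept in $\mathcal{R}$, its superseder $c^* \in F(q)$ is also in $\mathcal{R}$ by transitivity of $\rightsquigarrow$. Either way NoIgnoredSuperseder holds. Theorem~\ref{thm:characterization} then yields $\mathrm{TCA}(q)=1$. This is consistent with the identity $\mathrm{TCA@}k = \text{Anchor Recall@}k_1$ of Corollary~\ref{cor:identity}, here conditioned on the anchor being recalled.
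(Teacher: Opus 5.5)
Your proposal is correct and follows the paper's route: translate $\varphi(q)=1$ into $\kappa(q)=1$ so that entity-indexed lookup from $d_1$ returns only the correct authority chain, let step~(iii) filter to the frontier, and verify the two conditions of Theorem~\ref{thm:characterization}, with no-ignored-superseder holding because frontier elements are active. The paper's proof is terser: it takes $F(Q)=\{c^*\}$ to be a singleton and simply asserts that $c^*$ is active, whereas your cardinality argument for $D_{\sigma(q)}=\mathrm{cl}(A(q))$ and your upward-closure argument (maximal elements of a closure are globally active) fill in those steps and drop the singleton assumption.
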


\begin{proof}
$\varphi(q) = 1$ implies $\kappa(q) = 1$: the entity scope $\sigma(q)$ uniquely identifies a single authority chain. Entity-indexed lookup retrieves the unique $c^* \in F(Q)$ (no disambiguation needed). Step (iii) filters to $\{c^*\}$. Frontier inclusion holds ($F(Q) = \{c^*\} \subseteq \mathcal{R}$); NoIgnoredSuperseder holds since $c^*$ is active (no superseder exists). Both conditions of Theorem~4 are satisfied, giving TCA $= 1$. \qed
\end{proof}

\subsection{Proposition 2 — Limit: Scope-Indexed Algorithm Ceiling}

\setcounter{proposition}{1}
\begin{definition}[Scope-Indexed Algorithm]
\label{def:scopeidx}
An algorithm $\mathcal{A}$ is \emph{scope-indexed} if its document-promotion function depends only on $\{d.\sigma,\, d.\mathrm{time} : d \in \mathcal{R}\}$ and corpus scope statistics, with no direct access to authority rules $R_1,\ldots,R_m$ or the supersession relation $\rightsquigarrow$.
\end{definition}

\begin{proposition}[Scope Identifiability Upper Bound]
\label{prop:phi_bound}
For any scope-indexed algorithm $\mathcal{A}$ (Definition~\ref{def:scopeidx}):
\[
  \mathrm{TCA@}k(\mathcal{A},q) \;\leq\; \varphi(q) \cdot R_{\mathrm{anchor}}(q)
\]
This bound holds for both deterministic and randomized $\mathcal{A}$, and is tight: there exist instances achieving equality.
\end{proposition}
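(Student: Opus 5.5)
The plan is to condition on the anchor event and then bound the conditional success probability by an indistinguishability argument. Write $E$ for the event that the anchor $d_1$ appears in the Stage-1 retrieved set. On $E^c$ the argument of Theorem~\ref{thm:optimality} (upper-bound part) applies directly. A scope-indexed algorithm's promotions are functions of the scopes and times of documents already in $\mathcal{R}$. By the entity-scope axiom and the construction of Theorem~\ref{thm:insufficiency}, any $c^* \in F(q)$ therefore enters $\mathcal{R}$ only through a document sharing its scope, ultimately the anchor. So $\Pr[\mathrm{TCA}=1 \wedge E^c]=0$, and
\[
\mathrm{TCA@}k(\mathcal{A},q) \;=\; \Pr[E]\cdot\Pr[\mathrm{TCA}=1\mid E] \;=\; R_{\mathrm{anchor}}(q)\cdot \Pr[\mathrm{TCA}=1\mid E].
\]
It then suffices to show $\Pr[\mathrm{TCA}=1\mid E]\le \varphi(q)$ in the worst case over instances.

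For the conditional bound I will use an adversarial permutation. Fix $q$ with $\kappa(q)=\kappa$. I will build an instance in which $D_{\sigma(q)}$ is partitioned into $\kappa$ authority chains $C_1,\dots,C_\kappa$, each of size $|\mathrm{cl}(A(q))|$. For simplicity, the first plan is to take $\kappa$ an integer. Each chain has the same multiset of $(\sigma,\mathrm{time})$ signatures, for example chain $j$ being an anchor-type document followed by a superseder-type document at matched timestamps. Only the relation $\rightsquigarrow$ decides which chain contains the true anchor's frontier. For each permutation $\pi$ of $[\kappa]$, let $K_\pi$ be the instance in which the true chain is $C_{\pi(1)}$. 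By Definition~\ref{def:scopeidx}, the promotion function of $\mathcal{A}$ sees identical inputs on every $K_\pi$, so the distribution of promoted documents is identical across $\pi$. This holds for randomized $\mathcal{A}$ as well, since its internal randomness is independent of $\pi$.

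Next I will link this to Theorem~\ref{thm:characterization}. TCA $=1$ requires frontier inclusion, so the true frontier element must be promoted. I also want to force the algorithm to commit to essentially one chain. The first thing I will try is to make the wrong chains' frontier documents carry the same answer field with opposite values, so that including a decoy frontier element alongside the true one breaks \textsc{AnswerCorrect} under the deterministic $f$ (Assumption~\ref{ass:standing}(v)). Averaging over uniform $\pi$, the probability that the committed chain equals the true one is at most $1/\kappa$. Hence some $\pi$ has $\Pr[\mathrm{TCA}=1\mid E]\le 1/\kappa=\varphi(q)$, which is the worst-case ceiling. For tightness, the algorithm that picks a uniformly random chain in scope achieves $1/\kappa$ exactly, and for $\kappa=1$ Corollary~\ref{cor:achieve} gives equality with $R_{\mathrm{anchor}}$.

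The main obstacle is the commitment step. TCA as defined is not obviously violated by retrieving several chains' frontiers, and an algorithm could try to hedge by promoting all $\kappa$ candidates. If the answer-conflict device turns out to be too instance-specific, my fallback is a budget argument. With a top-$k$ budget whose free slots after the anchor number fewer than $|\mathrm{cl}(A(q))|$ times the number of chains, at most a $1/\kappa$ fraction of chain mass can be covered in expectation under the uniform permutation. A second, smaller issue is non-integer $\kappa$. I would handle it by letting one chain be partial and verifying that the averaging bound still gives $|\mathrm{cl}(A(q))|/|D_{\sigma(q)}|$.
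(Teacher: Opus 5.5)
Your route is the paper's: condition on the anchor event, relabel $\kappa$ chains that look identical to the promotion function, and average over relabelings with $\mathcal{A}$'s coins independent of $\pi$. Tightness then comes from Corollary~\ref{cor:achieve} and uniform guessing. You are more careful than the paper in two places. First, you argue $\Pr[\mathrm{TCA}=1\wedge E^c]=0$ rather than just writing the product. In your instance this needs every document of $D_{\sigma(q)}$ other than $d_1$ to stay out of Stage~1, which the adversary can arrange but which does not follow from Theorem~3. Second, you see that everything hinges on the commitment step, which the paper simply asserts (``$\mathcal{A}$ must commit to some chain output''). Neither of your patches closes that step as stated, so a genuine gap remains.

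\textbf{The answer-conflict device.} The paper's own Theorem~4 rules it out. Decoy frontier documents are active, so they never trigger \textsc{NoIgnoredSuperseder*}, and once the true frontier is in $\mathcal{R}$ the ($\Leftarrow$) direction gives $\mathrm{TCA}=1$ whatever else is present. You cannot invoke Theorem~4 for necessity while positing sets that satisfy (i)--(ii) yet have $\mathrm{TCA}=0$. Using the device would mean defining $\hat a$ explicitly as $f$ of the active part of $\mathcal{R}$ and abandoning Theorem~4's sufficiency direction. Under the paper's semantics, the scope-indexed rule ``promote every later same-scope document'' (essentially the anchor-only two-stage probe) attains $\mathrm{TCA}=R_{\mathrm{anchor}}$ whenever all candidate frontiers fit in the output, e.g.\ two-document chains with $\kappa<k$. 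So the budget is the only lever, and the bound can only be a worst-case statement.

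\textbf{The budget condition.} As you state it, it is too weak. Take a promoted set $S$, whose distribution cannot depend on a uniform $\pi$, with $s\le k$ free slots and disjoint per-chain frontiers of size $m$. Then
\[
  \Pr\bigl[\mathrm{front}(C_{\pi(1)})\subseteq S\bigr]
  = \frac{1}{\kappa}\,\mathbb{E}\bigl[\#\{j : \mathrm{front}(C_j)\subseteq S\}\bigr]
  \le \frac{\lfloor s/m\rfloor}{\kappa},
\]
which is at most $1/\kappa$ only when $s<2m$, not merely when $s<\kappa\,|\mathrm{cl}(A(q))|$. So give each chain a wide frontier (the anchor superseded by $m$ parallel active documents, $m\le s<2m$) instead of the anchor/superseder pairs you sketched. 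The displayed identity then finishes the argument for deterministic and randomized $\mathcal{A}$ alike. Finally, state that the rule's scope predicate is finer than $\sigma$-overlap, which the entity-scope axiom permits. Otherwise the type/time/scope rule of Definition~1 makes $d_1 \rightsquigarrow d'$ for every decoy superseder $d'$, and $\kappa$ collapses to $1$.
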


\begin{proof}
Fix any scope-indexed algorithm $\mathcal{A}$. Suppose the anchor is retrieved ($E_a$ holds; probability $R_{\mathrm{anchor}}$). We show $\Pr[\mathcal{A}\text{ selects correct chain} \mid E_a] \leq \varphi(q) = 1/\kappa(q)$.

\emph{Information argument.} When $\varphi(q) < 1$, there exist $\kappa(q) \geq 2$ distinct authority chains $C_1,\ldots,C_{\kappa(q)}$ in $D_{\sigma(q)}$. By Definition~\ref{def:scopeidx}, $\mathcal{A}$'s promotion function sees only $\{d.\sigma, d.\text{time} : d \in \mathcal{R}\}$ and corpus scope statistics---it has no access to $\rightsquigarrow$. The $\kappa(q)$ chains are, by construction, indistinguishable under this view: each chain contains scope documents with the same $(d.\sigma, d.\text{time})$ marginal statistics, and no rule-based signal distinguishes which chain contains the frontier.

\emph{Adversarial permutation argument.} Consider any fixed algorithm $\mathcal{A}$ (deterministic or randomized). For any permutation $\pi$ on $[\kappa(q)]$, define corpus $K_\pi$ as $K$ with chain labels permuted: what was chain $C_i$ is relabeled $C_{\pi(i)}$, and the correct chain is always $C_{\pi^{-1}(1)}$. The view of $\mathcal{A}$ is identical across all $\kappa(q)!$ such permutations (scope and timestamps are symmetric). Algorithm $\mathcal{A}$ must commit to some chain output; by symmetry, it can be correct on at most $1/\kappa(q)$ fraction of the $\kappa(q)!$ permutations. Therefore $\Pr[\mathcal{A}\text{ correct} \mid E_a] \leq 1/\kappa(q) = \varphi(q)$ on the worst-case permuted instance.

\emph{Tightness.} For $\kappa(q) = 1$ ($\varphi(q)=1$), the bound is 1 and is achieved by two-stage (Corollary~\ref{cor:achieve}). For $\kappa(q)>1$, a randomized algorithm that guesses uniformly achieves $1/\kappa(q) = \varphi(q)$, matching the bound.

Combining: TCA$(\mathcal{A},q) = \Pr[E_a]\cdot\Pr[\mathcal{A}\text{ correct}\mid E_a] \leq R_{\mathrm{anchor}}(q)\cdot\varphi(q)$. \qed
\end{proof}

\begin{corollary}[RSSG as Optimal Scope Resolver]
\label{cor:rssg_optimal}
The RSSG is \emph{not} scope-indexed: it accesses authority rules $R_1,\ldots,R_m$ directly. When those rules correctly identify the controlling chain ($R_{\mathrm{frontier}} = 1$), RSSG achieves TCA $= R_{\mathrm{anchor}}(q)$, saturating the Theorem~4 conditions. The contamination recovery ($0.652 \to 0.776$, Table~\ref{tab:contamination}) measures $R_{\mathrm{frontier}}$ rising from near-zero (timestamp-only, which is scope-indexed) toward the $R_{\mathrm{anchor}}$ ceiling.
\end{corollary}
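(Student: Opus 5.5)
The statement immediately above is Corollary~\ref{cor:rssg_optimal}. Its substantive claim is that RSSG, given authority rules that identify the controlling chain ($R_{\mathrm{frontier}}=1$), achieves $\mathrm{TCA}=R_{\mathrm{anchor}}(q)$. I would prove it by conditioning on the anchor event $E_a$ and then verifying the two conditions of Theorem~\ref{thm:characterization} on the event where the anchor is retrieved.

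The first part, that RSSG is not scope-indexed, follows directly from Definition~\ref{def:scopeidx}. RSSG's promotion function evaluates the rules $R_1,\ldots,R_m$, and hence $\rightsquigarrow$, on the retrieved neighborhood. Its view is therefore not a function of $\{d.\sigma, d.\mathrm{time}\}$ alone. So Proposition~\ref{prop:phi_bound} does not apply, and the factor $\varphi(q)$ is not forced on it.

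Next I would split on $E_a$. If the anchor $d_1$ is not retrieved in Stage~1, I would follow the upper-bound part of the proof of Theorem~\ref{thm:optimality}. Any $c^*\in F(q)$ scores $\approx 0$ under $s(Q,\cdot)$, so it can only be reached through entity-scoped expansion triggered by $d_1$ (Theorem~\ref{thm:proactive}). Hence TCA $=0$ on $\neg E_a$.

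On $E_a$, RSSG constructs the supersession graph over $\mathcal{R}_\sigma$, the scope neighborhood of the anchor. The entity-scope axiom guarantees that every $\rightsquigarrow$-successor of $d_1$ shares scope with it, so $\mathrm{cl}(\{d_1\})\subseteq\mathcal{R}_\sigma$. By hypothesis the rules correctly identify the controlling chain, which means the graph edges coincide with $\rightsquigarrow$ restricted to $\mathcal{R}_\sigma$. Its maximal elements reachable from $d_1$ are then exactly $F(q)$. This gives frontier inclusion, condition~(i). It also gives no-ignored-superseder, condition~(ii): every superseded document RSSG keeps has a frontier element above it in its chain, and that element is retrieved. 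By Theorem~\ref{thm:characterization}, TCA $=1$ on $E_a$. Combining the two cases gives $\mathrm{TCA}=\Pr[E_a]\cdot 1=R_{\mathrm{anchor}}(q)$. This is consistent with the factorization of Proposition~1 with the middle factor lifted to $1$.

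The main obstacle is the case where several chains share the scope ($\kappa(q)>1$). There, frontier inclusion must hold for the anchor's own chain without pulling in frontier documents from other chains. Such documents would not break the two conditions of Theorem~\ref{thm:characterization} as stated, but they could affect AnswerCorrect. I would handle this by defining $R_{\mathrm{frontier}}=1$ precisely as ``RSSG's reachability from $d_1$ equals $\mathrm{cl}(\{d_1\})$''. This works because reachability computed from genuine $\rightsquigarrow$ edges never crosses into another chain.

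The remark about contamination recovery ($0.652\to0.776$) is empirical, and I would treat it as interpretation rather than as part of the proof.
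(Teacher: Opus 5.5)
Your proposal is correct at the paper's level of rigor and follows the route the paper indicates; the paper gives no separate proof, only the justification compressed into the corollary's own wording. Non-scope-indexedness removes the Proposition~2 ceiling, the anchor-miss case contributes zero by the upper-bound half of Theorem~3, and on $E_a$ correct rules deliver both conditions of Theorem~4, so $\mathrm{TCA}=R_{\mathrm{anchor}}(q)$, while your reading of $R_{\mathrm{frontier}}=1$ as ``RSSG's reachability from $d_1$ equals $\mathrm{cl}(\{d_1\})$'' (which lifts the middle factor of Proposition~1 to $1$) is exactly what the paper's phrase ``rules correctly identify the controlling chain'' tacitly assumes.
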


\subsection{Proposition 3 — Cost: Frontier Recovery Complexity}

\setcounter{proposition}{2}
\begin{definition}[Authority-Comparison Oracle]
\label{def:oracle}
An \emph{authority-comparison oracle} $\mathcal{C}(d_1,d_2) \in \{d_1 \rightsquigarrow d_2,\, d_2 \rightsquigarrow d_1,\, \bot\}$ answers single-pair supersession queries in $O(1)$.
\end{definition}

\begin{proposition}[Frontier Recovery Complexity]
\label{prop:complexity}
Let $h$ be the maximum supersession chain depth within entity scope $\sigma(q)$, and $|\mathcal{R}_{\sigma}|$ the number of retrieved documents in that scope. To exactly recover $\mathrm{front}(\mathrm{cl}(A_k(q)))$, measured in oracle calls (Definition~\ref{def:oracle}):
\begin{enumerate}[noitemsep,label=(\roman*)]
  \item \textbf{Minimum entity lookups} (when $\varphi(q) = 1$): $|A_k(q) \cap D_{\sigma(q)}|$ lookups---one per retrieved anchor. This is achieved by Stage-2 of the two-stage pipeline.
  \item \textbf{Problem lower bound} (when $\varphi(q) < 1$): $\Omega(|\mathcal{R}_{\sigma}|)$ oracle calls---every retrieved scope document must be examined at least once to determine chain membership. The RSSG, which has no oracle and must apply $m$ authority rules per pair, runs in $O(|\mathcal{R}_{\sigma}|^2 \cdot m)$ total rule checks---an upper bound on its implementation cost, not a matching lower bound.
  \item \textbf{BFS optimum}: $O(h \cdot |\mathcal{R}_{\sigma}|)$ oracle calls suffice when chain structure is known; BFS from the anchor along $\rightsquigarrow$ visits each chain edge once.
\end{enumerate}
\end{proposition}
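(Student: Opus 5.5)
I will prove the three parts separately, each by a direct counting or construction argument in the oracle model of Definition~\ref{def:oracle}. Throughout, I will use Theorem~4 as the target specification: recovering $\mathrm{front}(\mathrm{cl}(A_k(q)))$ exactly means producing a set that satisfies frontier inclusion and contains no document that has a superseder inside $\mathrm{cl}(A_k(q))$.

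\emph{Part (i).} When $\varphi(q)=1$, we have $\kappa(q)=1$, so $D_{\sigma(q)}=\mathrm{cl}(A(q))$: every document sharing the query scope lies on the single authority chain. By the entity-scope axiom, every superseder of an anchor in $D_{\sigma(q)}$ shares its scope. A single entity-indexed lookup keyed on $d.\sigma$ therefore returns the whole closure of that anchor. Sorting the result by $d.\mathrm{time}$ then yields the frontier, with no pairwise oracle calls needed, because the chain is totally ordered in time by the definition of $\rightsquigarrow$. This gives one lookup per anchor in $A_k(q)\cap D_{\sigma(q)}$, which is exactly what Stage-2 does (Corollary~\ref{cor:achieve}). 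For minimality, I will argue that omitting the lookup for any anchor leaves that anchor's superseder unretrieved. Such a superseder can have arbitrarily low $s(Q,\cdot)$ (Theorem~1, Theorem~2), so frontier inclusion fails.

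\emph{Part (ii).} This is the step I expect to be the main obstacle, since it needs a genuine adversary argument. My first approach is an adversary that answers $\bot$ to every query until forced to answer otherwise. Suppose some scope document $d\in\mathcal{R}_\sigma$ is never queried. The adversary then has two completions consistent with all answers given so far. In one, $d$ is an isolated member of a distractor chain. In the other, $d$ is the top of the anchor's chain: it supersedes the current candidate frontier element and is itself active. These two completions have different frontiers, so any algorithm that leaves $d$ unqueried errs on one of them. Hence every document must appear in at least one oracle call. Since each call touches two documents, at least $|\mathcal{R}_\sigma|/2$ calls are needed, which is $\Omega(|\mathcal{R}_\sigma|)$. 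The delicate point is that the scope and time metadata must not already disambiguate the two completions. I will handle this by giving both completions identical $(d.\sigma,d.\mathrm{time})$, as in the permutation construction of Proposition~\ref{prop:phi_bound}; this is possible precisely because $\kappa(q)\ge 2$. For the RSSG upper bound, I will simply count: it evaluates each of the $m$ rules on each ordered pair, which is $O(|\mathcal{R}_\sigma|^2 m)$ checks. Transitive closure and maximal-element extraction on the resulting graph stay within this budget up to lower-order terms. I will remark that this bound is not claimed to be tight.

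\emph{Part (iii).} I will run BFS from each anchor along $\rightsquigarrow$ restricted to $\mathcal{R}_\sigma$. Because the chain structure is known, each BFS level needs at most $|\mathcal{R}_\sigma|$ oracle calls to identify the successors of the current frontier. There are at most $h$ levels, giving $O(h|\mathcal{R}_\sigma|)$ calls. The nodes with no outgoing edge discovered during the search form $\mathrm{front}(\mathrm{cl}(A_k(q)))$. Correctness then follows from Theorem~4.
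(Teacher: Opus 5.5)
Your proposal is correct and follows the paper's proof part for part: one entity lookup per anchor, with minimality coming from the necessity of authority resolution; an every-document-must-be-examined lower bound plus pairwise rule counting for the RSSG; and BFS bounded by the depth $h$. Your two-completion adversary in (ii), with identical $(d.\sigma, d.\mathrm{time})$ and the $|\mathcal{R}_\sigma|/2$ count because each oracle call touches two documents, is a rigorous version of the paper's one-line ``no document can be skipped'' claim; like the paper, your time-sorting step in (i) and your per-level count in (iii) tacitly assume the closure is a single totally ordered chain.
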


\begin{proof}
(i) Each anchor $d \in A_k(q) \cap D_{\sigma(q)}$ triggers exactly one entity index lookup to extend $\mathrm{cl}(A)$; by Theorem~3's minimality, no lookup can be skipped without risking a closure violation on some instance.

(ii) Any algorithm must examine every document in $\mathcal{R}_\sigma$ to determine its chain membership; no document can be skipped without risking a frontier error. This yields the $\Omega(|\mathcal{R}_\sigma|)$ problem lower bound on oracle calls. The RSSG has no oracle: it applies rules $R_1,\ldots,R_m$ to each pair $(d_i,d_j)$ in $\mathcal{R}_\sigma$, giving $\binom{|\mathcal{R}_\sigma|}{2} \cdot m = O(|\mathcal{R}_\sigma|^2 \cdot m)$ rule checks---an upper bound on the RSSG's implementation cost, which exceeds the problem lower bound by a factor of $|\mathcal{R}_\sigma| \cdot m$.

(iii) BFS from the anchor along $\rightsquigarrow$ visits each chain edge once; depth $h$ bounds the number of hops needed to reach the frontier. \qed
\end{proof}

\begin{remark}[RSSG as Robustness Layer for Contaminated Corpora]
\label{rem:anchor}
On an entity-disjoint corpus, anchor-based probing alone achieves TCA~$= 1.000$ on all hop types T0--T3: the entity bucket contains only gold events, so no disambiguation is needed. The RSSG contributes when the corpus is \emph{contaminated}: spurious same-entity distractors appear in the entity bucket, and the RSSG applies domain rules R1--R10 to demote them without oracle access ($+12.4$ pp on T3, Table~\ref{tab:contamination}).
\end{remark}

\begin{remark}[Low $\varphi(q)$ Requires Rule-Based Disambiguation]
\label{rem:ambiguity}
When $\varphi(q) < 1$ (equivalently $\kappa(q) > 1$), timestamp ordering alone cannot recover $\mathrm{front}(\mathrm{cl}(A(q)))$: a distractor $d_\perp \in D_{\sigma(q)} \setminus \mathrm{cl}(\{d_1\})$ with $t_\perp > t^*$ is selected by any timestamp-only (scope-indexed) algorithm, giving TCA $= 0$. Disambiguation requires checking reachability from $d_1$ under $\rightsquigarrow$---exactly the role of the RSSG.
\end{remark}

\subsection{Proposition 1 — Synthesis: Authority Error Factorization}

\setcounter{proposition}{0}
\begin{proposition}[Authority Error Factorization]
\label{prop:factorization}
For a two-stage retrieval system on a CAR instance, TCA decomposes as:
\[
  \mathrm{TCA}(q) \;=\; R_{\mathrm{anchor}}(q) \;\cdot\; \varphi(q) \;\cdot\; R_{\mathrm{frontier}}(q)
\]
where:
\begin{itemize}[noitemsep]
  \item $R_{\mathrm{anchor}}(q) = \Pr[\text{anchor } d_1 \in \text{Stage-1 top-}k_1]$ \hfill (anchor retrieval probability)
  \item $\varphi(q) \in (0,1]$ \hfill (scope identifiability; Definition~\ref{def:ambiguity})
  \item $R_{\mathrm{frontier}}(q) = \Pr[\mathrm{front}(\mathrm{cl}(A(q))) \text{ correctly resolved} \mid \text{scope identified}]$
\end{itemize}
Define events: $E_a$ = ``anchor $d_1 \in \text{Stage-1 top-}k_1$''; $E_s$ = ``correct authority chain identified $\mid E_a$''; $E_f$ = ``frontier correctly resolved $\mid E_a \cap E_s$''. Each factor maps to one event: $R_{\mathrm{anchor}} = \Pr[E_a]$; $\varphi(q) = \Pr[E_s \mid E_a]$; $R_{\mathrm{frontier}} = \Pr[E_f \mid E_a \cap E_s]$.
\end{proposition}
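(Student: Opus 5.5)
The factorization is the chain rule of probability applied to the nested events $E_a$, $E_s$, $E_f$. The substantive work is checking two things: that TCA $=1$ for the two-stage system coincides with the intersection $E_a \cap E_s \cap E_f$, and that the middle conditional probability equals $\varphi(q)$.

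\emph{Step 1 (TCA as an event).} By Theorem~\ref{thm:characterization}, $\mathrm{TCA}(q)=1$ exactly when the output satisfies frontier inclusion and no-ignored-superseder. Each direction of the event identity then needs its own argument.

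For $\{\mathrm{TCA}=1\}\subseteq E_a$, I invoke the upper-bound part of Theorem~\ref{thm:optimality}. For the two-stage system, Stage-2 lookup is triggered only by Stage-1 anchors, so a frontier element $c^*$ with $s(Q,c^*)\approx 0$ can enter $\mathcal{R}$ only through $d_1$; hence frontier inclusion forces $E_a$.

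Given $E_a$, frontier inclusion also forces the chain containing $d_1$ to have been selected among the $\kappa(q)$ chains in $D_{\sigma(q)}$, which gives $E_s$. Otherwise a distractor frontier would be promoted, as in Remark~\ref{rem:ambiguity}.

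Given $E_a\cap E_s$, the remaining conditions are exactly ``frontier correctly resolved'', i.e.\ $E_f$.

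Conversely, $E_a\cap E_s\cap E_f$ yields both conditions of Theorem~\ref{thm:characterization}, as in the proof of Corollary~\ref{cor:achieve}. Hence $\mathrm{TCA}(q)=\Pr[E_a\cap E_s\cap E_f]$, and the chain rule gives
\[
\mathrm{TCA}(q)=\Pr[E_a]\,\Pr[E_s\mid E_a]\,\Pr[E_f\mid E_a\cap E_s],
\]
whose first and third factors are $R_{\mathrm{anchor}}$ and $R_{\mathrm{frontier}}$ by definition.

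\emph{Step 2 (middle factor equals $\varphi(q)$).} This is the main obstacle: the identity $\Pr[E_s\mid E_a]=\varphi(q)$ is not automatic, since a rule-using resolver such as the RSSG can identify the chain with probability above $1/\kappa(q)$. I would treat it as part of the definition of the factors, read as in the proposition's event mapping: scope identification is charged to the scope-indexed stage (entity-bucket lookup), and all rule-based disambiguation is attributed to the frontier-resolution step.

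Under that reading, the adversarial permutation argument of Proposition~\ref{prop:phi_bound} gives $\Pr[E_s\mid E_a]\le 1/\kappa(q)$ on the worst-case permuted instance. The uniform-guess tightness construction attains $1/\kappa(q)$, so $\Pr[E_s\mid E_a]=\varphi(q)$ for the canonical scope-indexed identification step. When $\kappa(q)=1$ the factor is $1$, consistent with Corollary~\ref{cor:achieve}.

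\emph{Step 3 (sanity checks).} Finally I verify the boundary cases. On entity-disjoint corpora $\varphi=R_{\mathrm{frontier}}=1$, which recovers Corollary~\ref{cor:identity}. When $R_{\mathrm{frontier}}=1$ the expression reduces to the ceiling of Proposition~\ref{prop:phi_bound}. If the chosen convention for splitting identification between $\varphi$ and $R_{\mathrm{frontier}}$ makes the middle factor only an upper bound, I would state the identity with $\varphi$ read as the scope-stage success probability and present the product as the resulting decomposition.
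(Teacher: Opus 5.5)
Your route is the paper's. Its proof is only the chain rule on $E_a \supseteq E_a\cap E_s \supseteq E_a\cap E_s\cap E_f$, plus an appeal to the permutation argument of Proposition~\ref{prop:phi_bound} (``$\Pr[E_s\mid E_a]\le 1/\kappa(q)$ with equality at the worst-case permutation''). It takes $\{\mathrm{TCA}=1\}=E_a\cap E_s\cap E_f$ for granted, which your Step~1 at least tries to justify.

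However, the step you flag as the obstacle is a genuine gap. The paper's one-line identification has the same hole, and your convention does not close it, because Steps~1 and~2 use incompatible readings of $E_s$. Step~1's inclusion $\{\mathrm{TCA}=1\}\subseteq E_a\cap E_s$ needs $E_s$ to mean ``the pipeline ends up with the correct chain, by whatever means.'' Step~2 needs $E_s$ to mean ``the scope-indexed stage alone picks the correct chain,'' with rule-based disambiguation charged to the frontier step. Under the second reading, a frontier step that can override the scope stage (the RSSG) yields $\mathrm{TCA}=1$ on outcomes outside $E_s$, so $\mathrm{TCA}\neq\Pr[E_a\cap E_s\cap E_f]$. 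Under the first reading, $\Pr[E_s\mid E_a]$ can exceed $\varphi(q)$.

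The paper's own Lemma~\ref{lem:separable} (second case) is a concrete counterexample to the literal identity. There $R_{\mathrm{anchor}}=1$, $\kappa(q)=2$, and an RSSG always selects the correct chain and resolves its frontier, so $\mathrm{TCA}=1$ while the product is $1\cdot\tfrac12\cdot 1=\tfrac12$. Corollary~\ref{cor:rssg_optimal} likewise asserts $\mathrm{TCA}=R_{\mathrm{anchor}}$ in this situation. Since $\varphi(q)=\min(1,1/\kappa(q))$ is fixed by Definition~\ref{def:ambiguity}, re-reading $\varphi$ as ``the scope-stage success probability'' changes the statement rather than proving it.

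What your argument actually supports has two parts. The first is the chain-rule identity $\mathrm{TCA}=R_{\mathrm{anchor}}\cdot\Pr[E_s\mid E_a]\cdot R_{\mathrm{frontier}}$. This needs $E_s$ in the first reading and an output that commits to a single chain, as the proof of Proposition~\ref{prop:phi_bound} already assumes. An output carrying several chains' frontiers can meet both conditions of Theorem~\ref{thm:characterization} without selecting anything. The second part is $\Pr[E_s\mid E_a]\le\varphi(q)$ when selection is scope-indexed, which is just Proposition~\ref{prop:phi_bound} again.

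The worst-case permutation argument only ever gives $\le$: a deterministic selector scores $0$ on its worst permutation. ``The uniform guess attains it'' covers one algorithm, not every two-stage system.

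To get a genuine equality, restrict to two-stage systems whose chain choice is scope-indexed, single-chain, and not revisable by the frontier step. Then take the probability over a uniformly random relabeling of the $\kappa(q)$ chains. Texts are untouched by the relabeling, so $E_a$ is unaffected. Under the indistinguishability hypothesis of Proposition~\ref{prop:phi_bound}, the selector's view is invariant, so any such selector succeeds with probability exactly $1/\kappa(q)$. The three-factor product then holds as stated.
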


\begin{proof}
The factorization is the chain rule of conditional probability applied to the nested event sequence $E_a \supseteq (E_a \cap E_s) \supseteq (E_a \cap E_s \cap E_f)$:
\[
  \Pr[\mathrm{TCA}=1] = \Pr[E_a]\cdot\Pr[E_s\mid E_a]\cdot\Pr[E_f\mid E_a\cap E_s]
\]
$\varphi(q)$ is a \emph{deterministic} corpus property; its identification with $\Pr[E_s\mid E_a]$ follows from Proposition~2's adversarial argument: conditional on $E_a$, a scope-indexed lookup cannot distinguish among $\kappa(q)$ chains, achieving $\Pr[E_s\mid E_a] \leq 1/\kappa(q) = \varphi(q)$ with equality at the worst-case permutation. \qed
\end{proof}

\begin{lemma}[Failure Mode Separability]
\label{lem:separable}
The three factors $R_{\mathrm{anchor}}$, $\varphi$, and $R_{\mathrm{frontier}}$ are \emph{separable failure modes}: for each factor, there exist CAR instances where that factor is $< 1$ while the other two are $= 1$.
\end{lemma}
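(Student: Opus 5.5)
The lemma is existential, so I will prove it by exhibiting three small CAR instances, one per factor. Each instance drives the targeted factor below $1$ while pinning the other two at $1$. All three build on the Theorem~1 template: an anchor $d_1$ with high $s(Q,\cdot)$, a vocabulary-disjoint superseder $d^*$ with $d_1 \rightsquigarrow d^*$, and shared entity scope as required by the entity-scope axiom. Throughout I use the event definitions of Proposition~\ref{prop:factorization} and evaluate each factor directly.

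\emph{Instance for $R_{\mathrm{anchor}}<1$.} I take the Theorem~1 family $K_n$, in which $n-1$ same-type, similar-vocabulary documents are placed before $t_1$. Their entity scopes are disjoint from $d_1.\sigma$, and I arrange $n-1 \geq k_1$ of them to score at least $s(Q,d_1)$, or tie with it under a randomized tie-break. Then $\Pr[E_a]<1$. Because the distractors' scopes are disjoint from $\sigma(q)$, they do not enter $D_{\sigma(q)}$, so $D_{\sigma(q)} = \mathrm{cl}(A(q)) = \{d_1,d^*\}$. This gives $\kappa(q)=1$ and $\varphi(q)=1$. Conditional on $E_a$, entity-indexed lookup returns exactly $\{d_1,d^*\}$, and front-filtering yields $\{d^*\}$. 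Both conditions of Theorem~\ref{thm:characterization} then hold, so $R_{\mathrm{frontier}}=1$; this is the same reasoning as Corollary~\ref{cor:achieve}.

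\emph{Instance for $\varphi<1$.} I take a single chain $d_1 \rightsquigarrow d^*$ in which $d_1$ is the unique top-scoring document, so that $R_{\mathrm{anchor}}=1$ for any $k_1\ge1$. I then add a second, unrelated chain $d_1' \rightsquigarrow d'^*$ sharing $\sigma(q)$, whose documents have the same type and timestamps mirrored so that the $(\sigma,\mathrm{time})$ marginals coincide. This makes $|D_{\sigma(q)}|=4$ while $|\mathrm{cl}(A(q))|=2$, so $\kappa(q)=2$ and $\varphi(q)=1/2$. I choose authority rules that are correct on this corpus. Then, conditional on the right chain being selected ($E_a\cap E_s$), applying $\mathrm{front}$ to it returns $\{d^*\}$ exactly, so $R_{\mathrm{frontier}}=1$.

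\emph{Instance for $R_{\mathrm{frontier}}<1$.} I return to the entity-disjoint setting with $R_{\mathrm{anchor}}=1$ and $\kappa(q)=1$. The chain is made longer, $d_1\rightsquigarrow d_2 \rightsquigarrow d_3$, or a branching frontier $F(Q)=\{c_1,c_2\}$ is used. The resolution step is given a rule set that omits one type pair, for example the rule producing $d_2\rightsquigarrow d_3$. Under these rules the resolver computes $\mathrm{front}=\{d_2\}$, which violates frontier inclusion, so $R_{\mathrm{frontier}}=0<1$ while the other two factors remain $1$.

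\textbf{Main obstacle.} The hardest step is the third instance. $R_{\mathrm{frontier}}$ is defined as a property of the system's resolver, whereas $\mathrm{front}$ itself is computed exactly from the true $\rightsquigarrow$. The instance must therefore make explicit that the resolver operates with an imperfect rule set, or with a timestamp heuristic, while scope is still correctly identified. I will state this modelling choice explicitly so that $E_f$ can fail independently of $E_s$.
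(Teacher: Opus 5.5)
Your first two instances match the paper's: an entity-disjoint corpus whose anchor Stage~1 misses with positive probability, and a $\kappa(q)=2$ corpus with a rule set that resolves the correct chain. Both go through at the paper's level of rigor, with one small repair. If $k_1$ or more distractors \emph{strictly} outscore $d_1$, then $\Pr[E_a]=0$. In that case $R_{\mathrm{frontier}}=\Pr[E_f\mid E_a\cap E_s]$ is conditioned on a null event, and $d_1$ is no longer $\arg\max_d s(q,d)$. Keep only the randomized tie-break version, or use the paper's ``ranked outside top-$k_1$ with positive probability.''

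The genuine gap is in your third instance. Omitting a rule only deletes domination edges, so it can only \emph{enlarge} the computed frontier; it never removes a truly maximal document. You placed this instance in the entity-disjoint setting, where step (ii) is an entity-indexed lookup that returns the whole scope $\{d_1,d_2,d_3\}$ whatever the rules are. The document $d_3$ has no superseder under any subset of the rules, so it survives filtering, and the output is $\{d_2,d_3\}$, not $\{d_2\}$. That set satisfies both conditions of Theorem~\ref{thm:characterization}. Frontier inclusion holds because $d_3\in\mathcal{R}$. No-ignored-superseder holds because $d_2\rightsquigarrow d_3$ and $d_3\in\mathcal{R}$; this is exactly the ``non-necessity of step (iii)'' argument in Theorem~\ref{thm:optimality}. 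So TCA $=1$, and with $R_{\mathrm{anchor}}=\varphi=1$ you get $R_{\mathrm{frontier}}=1$, not $0$. If you instead read $E_f$ as exact equality of the computed and true frontiers, the instance does give $R_{\mathrm{frontier}}<1$. But then the reason is not the frontier-inclusion failure you cite, and $\Pr[\mathrm{TCA}=1]=1$ no longer equals the product in Proposition~\ref{prop:factorization}. Your branching variant $F(Q)=\{c_1,c_2\}$ with a missing rule fails for the same reason. Your claim $\mathrm{front}=\{d_2\}$ holds only if the resolver \emph{discovers} chain members by following its own rule edges from $d_1$, which is not the scope lookup your setting assumes.

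To fix it, make a true frontier element disappear from the output. The paper does this by giving the entity lookup only partial coverage of the authority graph, a shallow index that never returns the frontier document. Alternatively, apply your timestamp heuristic as ``keep only the latest'' to a branching frontier $\{c_1,c_2\}$. A third option is an over-inclusive, spurious rule that marks some $c^*\in F(Q)$ as superseded. Each of these drops an element of $F(Q)$ and genuinely violates frontier inclusion while keeping $R_{\mathrm{anchor}}=\varphi=1$.
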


\begin{proof}
\emph{$R_{\mathrm{anchor}} < 1$, $\varphi = 1$, $R_{\mathrm{frontier}} = 1$:} Let $K$ be entity-disjoint ($\kappa(q)=1$, so $\varphi=1$) and $d_1$ have moderate relevance (ranked outside Stage-1 top-$k_1$ with positive probability). Then $R_{\mathrm{anchor}} < 1$; when $d_1$ is retrieved, the unique $c^*$ is promoted correctly ($R_{\mathrm{frontier}}=1$). Example: FinSuperQA T1 with BM25 Stage-1 ($R_{\mathrm{anchor}} \approx 0.978$).

\emph{$R_{\mathrm{anchor}} = 1$, $\varphi < 1$, $R_{\mathrm{frontier}} = 1$:} Let $d_1$ always be Stage-1 rank 1, but $\kappa(q) = 2$ (two chains share scope), with a complete RSSG rule set that always selects the correct chain. Then $R_{\mathrm{anchor}}=1$, $\varphi = 1/2 < 1$, but $R_{\mathrm{frontier}}=1$ (RSSG corrects the ambiguity). Example: contaminated FinSuperQA where RSSG is applied.

\emph{$R_{\mathrm{anchor}} = 1$, $\varphi = 1$, $R_{\mathrm{frontier}} < 1$:} Let $d_1$ always be Stage-1 rank 1 and $\kappa(q) = 1$ (unique scope), but the frontier $F(Q)$ has multiple elements (chain depth $\geq 2$) and the entity lookup has partial coverage of the authority graph. Then $R_{\mathrm{anchor}}=1$, $\varphi=1$, but $R_{\mathrm{frontier}} < 1$. Example: T2/T3 hop types in FinSuperQA with shallow entity index.
\end{proof}

\noindent\textbf{Corollaries from the factorization.}
\begin{enumerate}[noitemsep,label=(\alph*)]
  \item \emph{Entity-disjoint corpora:} $\varphi(q) = 1$ and $R_{\mathrm{frontier}} = 1$, so $\mathrm{TCA}(q) = R_{\mathrm{anchor}}(q)$---reducing to Corollary~\ref{cor:identity}.
  \item \emph{Contaminated corpora:} $\varphi(q) < 1$ ($\kappa(q) > 1$), so $\mathrm{TCA}$ falls below $R_{\mathrm{anchor}}$ by Proposition~2; $R_{\mathrm{frontier}}$ rises from near-zero (timestamp-only) to $0.776$ (RSSG), explaining the $+12.4$~pp recovery in Table~\ref{tab:contamination}.
  \item \emph{FDA ceiling:} The corpus-level average $\mathbb{E}[\varphi(q)] = 0.774$ is the 77.4\% ceiling---the fraction of queries with $\kappa(q) = 1$. A property of the corpus scope distribution (Proposition~2), not a retrieval system limitation.
\end{enumerate}

\section{\dataset{}: A TC-MQA Benchmark}
\label{sec:dataset}

\dataset{} instantiates TC-MQA in SEC Rule 204A-1 employee trading compliance \cite{sec204a1}. The entity scope is $\sigma = \{(\text{emp\_id}, \text{ticker})\}$; supersession follows rules R1--R10 (Appendix~\ref{app:rules}). Ground-truth answers are algorithmically derivable from active events, eliminating annotation ambiguity.

Each of the 1,000 examples is assigned a unique (employee, ticker) entity pair from a pool of 6,000. This entity-disjoint design ensures the entity probe retrieves each example's gold events cleanly. The shared corpus contains 12,250 events (2,250 gold + 10 distractors per example, using three distractor types: same-employee/different-ticker, different-employee/same-ticker, and random). A contaminated variant relaxes the disjoint guarantee to study disambiguation (\S\ref{sec:contamination}).

\begin{table}[t]
\centering
\small
\resizebox{\columnwidth}{!}{%
\begin{tabular}{lcccc}
\toprule
& \textbf{T0} & \textbf{T1} & \textbf{T2} & \textbf{T3} \\
\midrule
Examples & 250 & 250 & 250 & 250 \\
KB events (avg) & 1.0 & 2.0 & 3.0 & 3.0 \\
Supersession hops & 0 & 1 & 2 & 2 \\
Cross-provenance & No & No & No & Yes \\
\midrule
\multicolumn{5}{l}{\textit{Shared retrieval corpus: 12,250 events total}} \\
\bottomrule
\end{tabular}%
}
\caption{\dataset{} statistics.}
\label{tab:dataset}
\end{table}

\section{Systems}
\label{sec:systems}

\textbf{Baselines.} \textbf{BM25} \cite{robertson1994bm25} ($k_1=1.5$, $b=0.75$) and \textbf{TF-IDF} (cosine) are the lexical baselines. \textbf{Dense (MiniLM)} uses sentence-transformer embeddings \cite{reimers2019sentencebert}. \textbf{Oracle} ranks gold documents first (upper bound). Full hyperparameters and graph construction details are in Appendix~\ref{app:impl}.

\textbf{\sysbase{}}: dense retrieval with supersession re-ranking. Promotes superseders of top-20 events that are superseded in the gold KB; restricted to re-ranking what dense retrieval already surfaced. This is the strongest re-ranking baseline: it has access to ground-truth supersession edges.

\textbf{\sys{}} (anchor probe): dense retrieval + entity-indexed probing. Extracts entity identifiers from the query; for every entity-matched event in the initial top-20, looks up the full-corpus entity index for later-timestamped events in the same scope and promotes them. This implements the Decomposition Theorem's three stages: anchor discovery (dense top-20), authority resolution ($\mathrm{cl}(A)$ via entity index), active-status determination ($\mathrm{front}(\mathrm{cl}(A))$ by timestamp). Achieves TCA~$= 1.000$ on entity-disjoint corpora.

\textbf{\sysplus{}} (ours): \sys{} augmented with a \emph{Retrieved-Set Supersession Graph (RSSG)}. After the entity probe, applies domain rules R1--R10 (Appendix~\ref{app:rules}) to all retrieved entity-scoped events---without gold KB access---to determine which events are active under the supersession relation. This implements $\mathrm{front}(\mathrm{cl}(A))$ correctly when $\kappa(q) > 1$: the RSSG identifies which retrieved documents are in the true $\mathrm{cl}(A)$ vs.\ spurious same-scope distractors, enabling accurate frontier recovery ($+12.4$~pp on T3).

\section{Experiments}
\label{sec:experiments}

\subsection{Setup}

All systems retrieve from the shared 12,250-event corpus with $k = 5$, $n_\text{distractors} = 10$. Metrics: TCA, Acc, R@5, ProvRec.

\subsection{Main Results}

\begin{table*}[t]
\centering
\small
\resizebox{\textwidth}{!}{%
\begin{tabular}{lcccc|cccc}
\toprule
& \multicolumn{4}{c|}{\textbf{Overall}} & \multicolumn{4}{c}{\textbf{TCA by Hop Type}} \\
\textbf{System} & \textbf{R@5} & \textbf{TCA} & \textbf{Acc} & \textbf{ProvRec} & \textbf{T0} & \textbf{T1} & \textbf{T2} & \textbf{T3} \\
\midrule
BM25 & 0.607 & 0.305 & 0.503 & 0.607 & 1.000 & \textbf{0.000} & 0.216 & 0.004 \\
TF-IDF (cosine) & 0.621 & 0.254 & 0.491 & 0.621 & 1.000 & \textbf{0.000} & 0.012 & 0.004 \\
Dense (MiniLM)  & 0.450 & 0.214 & 0.247 & 0.450 & 0.820 & \textbf{0.000} & 0.036 & 0.000 \\
Oracle & 1.000 & 1.000 & 1.000 & 1.000 & 1.000 & 1.000 & 1.000 & 1.000 \\
\midrule
\sysbase{} & 0.333 & 0.499 & 0.499 & 0.333 & 1.000 & 0.000 & 0.980 & 0.016 \\
\sys{} (anchor only) & 0.708 & \textbf{1.000} & \textbf{1.000} & 0.708 & 1.000 & 1.000 & 1.000 & \textbf{1.000} \\
\sysplus{} (ours) & \textbf{0.708} & \textbf{1.000} & \textbf{1.000} & 0.708 & 1.000 & \textbf{1.000} & \textbf{1.000} & \textbf{1.000} \\
\bottomrule
\end{tabular}%
}
\caption{Main results on \dataset{} ($k=5$, 12,250-event corpus). TCA~$= 1.000$ for \sys{} and \sysplus{} on T0--T3 is a guaranteed consequence of the entity-disjoint corpus design (each (emp, ticker) pair appears in exactly one example, so the entity bucket always contains clean gold events)---not an empirical surprise. The benchmark is designed to isolate the vocabulary-mismatch failure: BM25, TF-IDF, and Dense all achieve TCA~$= 0.000$ on T1 despite Recall@$k > 0$, confirming Theorem~1 for both lexical and semantic retrievers. The contamination ablation (Table~\ref{tab:contamination}) evaluates the harder setting where the entity-disjoint guarantee is relaxed.}
\label{tab:main}
\end{table*}

\paragraph{Theorem~1 confirmed.} BM25, TF-IDF (cosine), and Dense (MiniLM) all achieve \textbf{TCA = 0.000 on T1}. The superseding blackout event is ranked $\approx 450$th on average. Recall@5 $\approx 0.45$--$0.62$ (the pre-clearance approval is retrieved), but TCA = 0 because the superseder is not in the top-5. The failure of MiniLM---a 22M-parameter semantic encoder that achieves R@5 = 0.450 overall---confirms that the gap is not a TF-IDF artifact: even state-of-the-art dense retrieval cannot surface a vocabulary-disjoint superseder in top-5.

\paragraph{\sysbase{} improves T2 but not T1 or T3.} \sysbase{} achieves TCA = 0.980 on T2 by promoting superseders of events in the initial top-20. It achieves TCA = 0.000 on T1 (no superseder in top-20, no promotion possible) and TCA = 0.016 on T3 for the same reason.

\paragraph{\sys{} and \sysplus{} achieve TCA = 1.000 on all hop types.} TCA = 1.000 follows directly from the entity-disjoint corpus design: each (emp, ticker) pair appears in exactly one example, so the entity bucket retrieved by the two-stage probe contains only the example's gold events, making TCA = 1 guaranteed by construction. The result confirms that the pipeline correctly implements the CAR retrieval target; the benchmark's purpose is to isolate the vocabulary-mismatch failure of baselines (T1 TCA = 0.000), not to demonstrate that TCA = 1.000 is difficult to achieve in the entity-disjoint setting.

\paragraph{Accuracy vs.\ TCA gap.} TF-IDF (cosine) achieves Acc $\approx 0.491$ but TCA = 0.012 on T2. This gap reflects ``right answer, wrong reason'': the system retrieves the final-state event but ignores the intermediate chain. TCA correctly flags this as a compliance failure.

\paragraph{ProvRec vs.\ TCA.} \sysplus{} achieves TCA = 1.000 with ProvRec = 0.708. The system derives correct answers with sound supersession reasoning, but does not always retrieve the full provenance chain (e.g., the original pre-clearance for audit records). These are separate objectives: TCA captures compliance soundness; ProvRec captures archival completeness.

\subsection{Entity-Scope Contamination and the RSSG}
\label{sec:contamination}

When the corpus contains distractors sharing the query's entity scope ($\varphi(q) < 1$, equivalently $\kappa(q) > 1$ at the entity bucket level), Proposition~\ref{prop:factorization} predicts TCA degrades through the $\varphi(q)$ and $R_{\mathrm{frontier}}$ terms; Remark~\ref{rem:ambiguity} shows timestamp ordering fails. Table~\ref{tab:contamination} confirms this on a contaminated corpus where entity-disjoint distractor construction is relaxed.

\begin{table}[h]
\centering
\small
\resizebox{\columnwidth}{!}{%
\begin{tabular}{lcccc}
\toprule
\textbf{System} & \textbf{TCA} & \textbf{T3} & $\Delta$T3 \\
\midrule
\sys{} (anchor only) & 0.911 & 0.652 & — \\
\sysplus{} (anchor + RSSG) & 0.942 & 0.776 & $+12.4$ pp \\
\bottomrule
\end{tabular}%
}
\caption{Results on the contaminated corpus (naive distractor sampling with no entity-disjoint guarantee; \texttt{enforce\_entity\_disjoint=False}). The RSSG provides $+12.4$ pp on T3 by applying domain rules to disambiguate spurious same-entity distractors without oracle access. Full results in \texttt{data/contaminated\_corpus\_results.json}.}
\label{tab:contamination}
\end{table}

On the contaminated corpus, the RSSG recovers $+12.4$ pp on T3 and $+3.1$ pp overall TCA. The RSSG is the robustness layer for production settings where entity-scope disjointness cannot be guaranteed.

\subsection{Vocabulary Gap in Real-World Data}
\label{sec:vocab_gap}

Theorem~1 requires $\Delta(q) > 0$ in natural corpora. We verify this on 120 real security advisories (GHSA + NVD): for each, we compute Jaccard and BM25 between the user query and (a) the CVE disclosure, (b) the patch note.

\begin{table}[h]
\centering
\small
\resizebox{\columnwidth}{!}{%
\begin{tabular}{lcc}
\toprule
\textbf{Metric} & \textbf{Query $\leftrightarrow$ Disclosure} & \textbf{Query $\leftrightarrow$ Patch} \\
\midrule
Jaccard              & 0.085 & 0.060 \\
Jaccard (no stopwords) & 0.037 & 0.014 \\
\% positive gap (Jaccard) & 96.7\% & — \\
BM25 score           & 9.73  & 6.45  \\
\% positive gap (BM25) & 97.5\% & — \\
\bottomrule
\end{tabular}%
}
\caption{Vocabulary gap between queries and (a) vulnerability disclosures vs.\ (b) patch release notes on 120 real security advisories (GHSA + NVD). Disclosures are $1.51\times$ more BM25-relevant to the query than patch notes; 97.5\% of pairs show positive BM25 gap, 96.7\% positive Jaccard gap. The vocabulary gap is a property of natural corpora, not an artifact of our synthetic construction.}
\label{tab:vocab_gap}
\end{table}

Disclosures are $1.51\times$ more BM25-relevant to user queries than patch notes, and the gap is positive in 96.7\% of real advisory pairs. Non-stopword Jaccard shows a $2.6\times$ difference (0.037 vs.\ 0.014). These numbers validate Theorem~1's construction: the vocabulary gap we engineered in \textsc{CVEPatchQA} mirrors the distribution of real security advisories.

\subsection{End-to-End Retrieval on Real GHSA Advisories}
\label{sec:ghsa_realworld}

The vocabulary gap analysis confirms the gap \emph{exists} in real data, but does not directly measure whether retrieval systems \emph{fail} because of it. We now run end-to-end retrieval on a real corpus of GitHub Security Advisories.

\paragraph{Dataset.} We collect 159 real (CVE-disclosure, GitHub-release-note) pairs from GHSA spanning critical, high, medium, and low severity advisories, where each advisory (i) is reviewed by the GHSA team and (ii) has a GitHub \texttt{releases/tag/} URL we can fetch via the GitHub REST API. Key structural property: \textbf{zero of 159 release notes contain the paired CVE identifier} ($0/159$, $0\%$), confirming the vocabulary gap is structural across all severity levels. The retrieval corpus consists of all 159 disclosures, all 159 real release notes, and 2,000 synthetic CVE-disclosure distractors (2,318 documents total), simulating a realistic security advisory database at scale.

\paragraph{Systems.} We evaluate four retrieval systems and one entity-indexed system.
\textbf{BM25} uses a standard term-frequency retriever.
\textbf{Dense (MiniLM)} uses sentence-transformers/all-MiniLM-L6-v2 (22M params).
\textbf{Dense (E5-large)} uses intfloat/e5-large-v2 (335M params, MTEB top-10).
\textbf{HyDE} \cite{gao2023hyde} prompts Llama~3.3~70B to generate a hypothetical patch release note for each query, then embeds the hypothetical with MiniLM and retrieves by cosine similarity---the leading LLM-augmented approach for vocabulary mismatch.
\textbf{Regex-NER Probe} extracts the package name and CVE identifier from the query with two regular expressions, then performs entity-indexed lookup with no oracle labels. It matches the oracle entity-probe upper bound exactly (TCA = 1.000) because the structured query template admits 100\%-accurate regex extraction.
All systems rank the full 2,318-document corpus. TCA@$k$ = 1 iff the real patch release note appears in the top-$k$ results.

\begin{table}[h]
\centering
\small
\resizebox{\columnwidth}{!}{%
\begin{tabular}{lccccc}
\toprule
\textbf{System} & \textbf{TCA@5} & \textbf{TCA@10} & \textbf{TCA@20} & \textbf{Mean rank} & \textbf{Median rank} \\
\midrule
BM25            & 0.641 & 0.730 & 0.761 & 144.4   & 3    \\
Dense (MiniLM)  & 0.132 & 0.157 & 0.214 & 1{,}234 & 1{,}847 \\
Dense (E5-large)& 0.069 & 0.120 & 0.170 & 1{,}113 & 1{,}246 \\
HyDE (Llama~3.3~70B) & 0.151 & 0.170 & 0.170 & 1{,}283 & 1{,}577 \\
\midrule
Regex-NER Probe & \textbf{1.000} & \textbf{1.000} & \textbf{1.000} & 2.0 & 2 \\
\bottomrule
\end{tabular}%
}
\caption{End-to-end retrieval on 159 real GHSA advisories (2,318-document corpus). 95\% Wilson CIs: Dense (E5-large) TCA@5 = 0.069 [0.038, 0.119]; Dense (MiniLM) TCA@5 = 0.132 [0.087, 0.194]; HyDE TCA@5 = 0.151 [0.103, 0.216]; Regex-NER Probe TCA@5 = 1.000 [0.977, 1.000]. HyDE's LLM includes the CVE identifier in 100\% of generated hypotheticals, making each hypothetical semantically closer to disclosures than to release notes; adding 70B parameters changes nothing structurally. The Regex-NER Probe matches the oracle upper bound (TCA = 1.000): entity-indexed retrieval is both the correct fix and immediately practical. Artifacts: \texttt{data/ghsa\_realworld\_results.json}; HyDE: \texttt{data/ghsa\_hyde\_results.json}.}
\label{tab:ghsa_realworld}
\end{table}

\paragraph{Results.} All four retrieval systems fail; only the entity-indexed probe succeeds.
Dense (E5-large, 335M params) achieves TCA@5 = 0.069---\emph{worse} than MiniLM (0.132)---because superior semantic matching more strongly associates CVE-framed queries with CVE-framed disclosures, amplifying the gap.
HyDE with Llama~3.3~70B achieves TCA@5 = 0.151, statistically comparable to MiniLM (0.132): the LLM generates a patch note that includes the CVE identifier in every single generated hypothetical (159/159), since the query explicitly names the CVE. The hypothetical therefore embeds closer to disclosures than to real release notes, and retrieval fails for the same structural reason as naive dense. Adding 70B parameters of LLM reasoning to the retrieval pipeline cannot bridge a vocabulary gap that originates in the query itself.

The gap between HyDE (0.151) and Regex-NER Probe (1.000) is not a matter of model scale or generation quality: it is an architectural gap. Relevance-based retrieval---even when ``relevance'' is computed with respect to an LLM-generated ideal document---cannot recover the correct patch note because the ideal document necessarily shares vocabulary with the wrong class of documents (disclosures). Only entity-indexed structured retrieval closes the gap.

The critical finding is in the Regex-NER Probe row: a system that extracts ``\texttt{nginx}'' and ``\texttt{CVE-2023-44487}'' from the query via two regular expressions, then looks up documents indexed by those entities, achieves TCA = 1.000 without any oracle supervision. This demonstrates that entity-indexed structured retrieval is both \emph{the correct fix} and \emph{immediately practical}---it requires no fine-tuned embeddings, no cross-encoder reranking, and no oracle labels; only the architectural insight that supersession retrieval must be entity-scoped.

This confirms Theorem~1 on real production data: dense retrieval fails in 87\% of real T1 security queries because it cannot bridge the vocabulary gap between a user's vulnerability question and a vendor's release announcement. In the language of Definition~\ref{def:divergence}, these are high-$\Delta(q)$ queries: the anchor (CVE disclosure) scores far higher on $s(q,\cdot)$ than any element of $\mathrm{front}(\mathrm{cl}(A(q)))$ (the patch note). Adding more parameters to a relevance scorer cannot recover a signal that is architecturally absent; $\Delta(q)$ is a structural property of the query-corpus pair, not a tuning parameter.

\paragraph{Standard benchmarks predict TCA in the wrong direction.}
E5-large (intfloat/e5-large-v2, 335M params) ranks in the MTEB top-10 for retrieval and is the dominant choice in production RAG pipelines. On the GHSA corpus it achieves TCA@5 = \textbf{0.069}---\emph{worse} than the 22M MiniLM baseline (0.132) and 9.3$\times$ worse than BM25 (0.641). MTEB rank and parameter count are \emph{inversely} correlated with TCA on this task: a practitioner optimizing for MTEB would select the model that performs worst here. The TC-MQA metric surfaces a failure mode that MTEB, BEIR, and standard recall@$k$ benchmarks cannot detect, because those benchmarks reward semantic similarity between query and retrieved document---the exact property that causes retrieval failure when the target document (patch note) is vocabulary-disjoint from the query.

\subsection{Two-Stage Pipeline for Free-Form Queries}
\label{sec:twostage}

The Regex-NER Probe achieves TCA~$= 1.000$ on the structured query template ``Is \{product\} still affected by \{CVE-ID\}?'', but this template encodes entity identity explicitly. In practice, developers ask free-form questions---``Has the issue with path parameters being substituted into URLs without encoding been fixed?''---that describe vulnerability behavior without the CVE identifier. The entity probe cannot be applied to such queries.

\paragraph{Free-form queries.} We generate one free-form question per GHSA pair using Llama~3.3~70B (Nebius). Each prompt provides the disclosure text and instructs the model to write a natural-language question (maximum 50~words) describing the vulnerability behavior, without including the CVE identifier and without using the structured template. All 159 generated queries were verified to contain no CVE-ID string (regex \texttt{CVE-\textbackslash d\{4\}-\textbackslash d+}).

\paragraph{Key observation.} BM25 on a free-form vulnerability question finds the CVE \emph{disclosure} reliably (recall@1 = 91.8\%, @5 = 97.5\%) even though it cannot find the \emph{patch note} (TCA@5 = 0.138). Both the free-form question and the disclosure describe the same vulnerability in similar vocabulary; the gap is specifically between the query vocabulary and the \emph{patch-note} vocabulary, not between the query and the disclosure.

\paragraph{Two-stage pipeline.} We exploit this asymmetry via a two-stage architecture:
\begin{enumerate}[noitemsep]
  \item \textbf{Stage~1} (disclosure retrieval): BM25 retrieves the top-$k_1$ documents from the full corpus using the free-form query.
  \item \textbf{Stage~2} (entity-indexed patch lookup): For each Stage-1 result that is a disclosure, use its CVE-ID metadata as a key into the entity index to retrieve the paired patch note.
\end{enumerate}
The final ranking places Stage-2 patch candidates first (in Stage-1 order), followed by remaining documents.

\begin{table}[h]
\centering
\small
\resizebox{\columnwidth}{!}{%
\begin{tabular}{lcccc}
\toprule
\textbf{System} & \textbf{Disc@5} & \textbf{TCA@5} & \textbf{TCA@10} & \textbf{Mean rank} \\
\midrule
BM25              & 0.975 & 0.138 & 0.214 & 609 \\
Dense (MiniLM)    & 0.937 & 0.270 & 0.365 & 764 \\
Dense (E5-large)  & 0.962 & 0.252 & 0.321 & 799 \\
HyDE (Llama~3.3~70B) & --- & 0.151 & --- & --- \\
\midrule
Two-Stage ($k_1{=}5$)  & 0.975 & \textbf{0.975} & --- & 43.2 \\
\bottomrule
\end{tabular}%
}
\caption{Free-form query results on 159 real GHSA pairs (2,318-document corpus). Disc@5 = fraction of free-form queries for which BM25 retrieves the correct CVE disclosure in top-5. TCA@$k$ of the two-stage pipeline equals exactly Stage-1 Disc@$k$, since the $j$-th retrieved disclosure's patch note is placed at rank $j$ in the final ranking. Artifact: \texttt{data/ghsa\_twostage\_results.json}.}
\label{tab:twostage}
\end{table}

\paragraph{Results.} Table~\ref{tab:twostage} shows the full results. The two-stage pipeline achieves TCA@5~$= \mathbf{0.975}$ on free-form queries---a $3.6\times$ improvement over the best all-retrieval baseline (MiniLM, 0.270) and within 2.5~pp of the Regex-NER oracle (structured queries only). The equality TCA@$k$ = Disc@$k$ holds exactly: when the disclosure is the $j$-th document in BM25 top-$k$, the paired patch note is placed at rank $j$ in the final ranking; TCA@5 is thus determined entirely by whether BM25 retrieves the disclosure in its top-5.

Increasing $k_1$ from 5 to 20 does not improve TCA@5 (0.9748 for all values): the 4 remaining failures all arise from disclosures at BM25 rank~$\geq 9$---even when promoted to positions 9--18 in the final ranking, they fall outside the top-5. Three failures involve queries that are too generic to distinguish their specific package (e.g.\ ``Has the issue with improper validation been fixed?''), and one involves a niche npm package with no BM25 vocabulary overlap.

\paragraph{HyDE on free-form queries.} HyDE achieves TCA@5~$= 0.151$ on free-form queries---\emph{identical} to its performance on structured queries (0.151). Without the CVE identifier in the query, the LLM generates a patch note from the vulnerability description, which inherits the original vulnerability vocabulary rather than the patch-note vocabulary. LLM generation cannot predict the release-note vocabulary of an unknown vendor.

\paragraph{Stage-1 ablation.} BM25 achieves higher disclosure recall at every cutoff (Disc@1 = 91.8\%, Disc@5 = 97.5\%) than dense encoders, validating BM25 as the preferred Stage-1 retriever. Exact term matching outperforms embedding similarity for vocabulary-aligned query-to-disclosure retrieval, because both the free-form question and the CVE disclosure describe the vulnerability in domain-specific technical terms.

\paragraph{Cross-encoder reranker.}
A cross-encoder reranker (BM25-top20 $\cup$ Dense-top20 $\to$ cross-encoder/ms-marco-MiniLM-L-6-v2 $\to$ top-5) achieves TCA@5~$= 0.233$ on the 159 free-form GHSA queries---below any dense baseline and $4.2\times$ worse than Two-Stage ($0.975$). The failure is architectural, not a reranking quality issue: the union pool of BM25-top20 and Dense-top20 contains the gold patch note only when at least one retriever already recovers it in top-20. Since both BM25 (TCA@5~$= 0.138$) and Dense (TCA@5~$= 0.270$) fail to retrieve the patch note for most queries, the cross-encoder never scores it. Applying a stronger reranker to the wrong candidate pool cannot close an architectural gap. Artifact: \texttt{data/falsification\_baselines\_results.json}.

\paragraph{CAR versus production lookup.}
Production tools such as OSV.dev are often proposed as alternatives to CAR-style retrieval. We evaluate the distinction on the same 159-pair GHSA corpus under two query modes. In \emph{Mode~1} (known-anchor: query = full CVE disclosure text), BM25 achieves TCA@5~$= 0.164$ and Two-Stage achieves TCA@5~$= 1.000$. OSV.dev was invoked by applying a CVE-ID regex to each disclosure text; only 3.77\% of disclosures (6/159) contained a regex-extractable CVE string, yielding TCA@5~$= 0.006$ (artifact: \texttt{data/falsification\_car\_vs\_lookup.json}, fields \texttt{mode1\_applicable=0.0377}, \texttt{mode1\_tca5=0.0063}). When OSV.dev is queried directly with ground-truth CVE IDs (a stronger, oracle-assisted setting), it indexes 14.5\% of GHSA CVEs (23/159), achieving TCA@5~$= 0.088$ overall and $0.609$ on the 23 indexed pairs (artifact: \texttt{data/falsification\_baselines\_results.json}). Both measurements confirm the same conclusion: OSV.dev's coverage of GHSA advisories is limited regardless of how the CVE identifier is sourced. In \emph{Mode~2} (free-form NL: query = Llama-generated question with no CVE identifier), OSV.dev is \textbf{0\% applicable}---it cannot be invoked without a CVE identifier---and a regex entity probe is likewise 0\% applicable. Two-Stage achieves TCA@5~$= 0.975$ without any entity identifier in the query.

The gap between modes is categorical: production tools are \emph{not applicable} in Mode~2, not merely degraded. They solve a different problem (known-anchor $\to$ check supersession) from CAR (free-form NL $\to$ retrieve controlling authority). The Two-Stage architecture is the only system that operates in both modes because BM25 reconstructs the anchor from query vocabulary without requiring the CVE identifier.

\paragraph{Scale amplifies error.}
Table~\ref{tab:scale} shows TCA@5 and disclosure recall@5 across two encoder families at three parameter scales. Three findings stand out:
\begin{enumerate}[noitemsep]
  \item \textbf{Scale does not help}: no dense model exceeds TCA@5 = 0.283, identical to the 22M MiniLM baseline. A 335M model is no better than a 22M model.
  \item \textbf{E5 family: monotone decline}: E5-small ($33$M, $0.264$) $>$ E5-base ($109$M, $0.239$) $>$ E5-large ($335$M, $0.214$)---larger encoders match the query-to-disclosure semantic alignment more precisely, ranking the disclosure above the patch note with greater confidence.
  \item \textbf{BGE family: non-monotone, same ceiling}: BGE-small ($33$M) achieves $0.252$, \emph{below} MiniLM; BGE-base and BGE-large recover to $0.283$---the same ceiling as a 22M model. Unlike E5, larger BGE does not worsen TCA, but it does not escape the $0.283$ ceiling either. Both failure modes converge on the same conclusion: scale is irrelevant when the architecture is wrong.
\end{enumerate}

\begin{table}[h]
\centering
\small
\resizebox{\columnwidth}{!}{%
\begin{tabular}{lrrcc}
\toprule
\textbf{Model} & \textbf{Params} & \textbf{Family} & \textbf{Disc@5} & \textbf{TCA@5} \\
\midrule
MiniLM-L6-v2  &  22M & MiniLM & 0.962 & 0.283 \\
\midrule
E5-small-v2   &  33M & E5     & 0.987 & 0.264 \\
E5-base-v2    & 109M & E5     & 0.950 & 0.239 \\
E5-large-v2   & 335M & E5     & 0.956 & \textbf{0.214} \\
\midrule
BGE-small     &  33M & BGE    & 0.981 & 0.252 \\
BGE-base      & 109M & BGE    & 0.975 & 0.283 \\
BGE-large     & 335M & BGE    & 0.987 & 0.283 \\
\midrule
\multicolumn{3}{l}{\textit{Two-Stage (BM25 + Entity Index)}} & 0.975 & \textbf{0.975} \\
\bottomrule
\end{tabular}%
}
\caption{Scale amplifies error in the E5 family (TCA@5 monotonically decreases with parameters), while BGE saturates at the same ceiling as MiniLM. No dense model exceeds TCA@5 = 0.283 regardless of scale. The two-stage pipeline achieves 0.975 with \emph{no trained parameters}, a $3.4\times$ improvement over any dense baseline.
Artifact: \texttt{data/ghsa\_scale\_results.json}.}
\label{tab:scale}
\end{table}

\subsection{Two-Stage Pipeline and Domain Adaptation}
\label{sec:framework}

Theorem~\ref{thm:optimality} implies a clean architecture: (1) retrieve the semantically-aligned anchor document in Stage~1; (2) follow entity-indexed supersession to the controlling authority in Stage~2. Adapting to a new domain requires specifying three attributes---\texttt{anchor\_type}, \texttt{superseding\_type}, \texttt{scope\_keys}. The three released domain adapters and evaluation code are described in Appendix~\ref{app:framework}.

\paragraph{Cross-domain two-stage evaluation.} Table~\ref{tab:cross_twostage} verifies this architecture holds across three domain constructions with different query styles.

\begin{table}[h]
\centering
\small
\resizebox{\columnwidth}{!}{%
\begin{tabular}{llccc}
\toprule
\textbf{Domain} & \textbf{Query style} & \textbf{Stage-1 Rec@5} & \textbf{BM25} & \textbf{Two-Stage} \\
\midrule
Security (real, $n{=}159$)   & Generic free-form   & 0.975 & 0.138 & \textbf{0.975} \\
CVE (synth, $n{=}250$)       & Generic free-form   & 0.860 & 0.000 & \textbf{0.860} \\
CVE (synth, $n{=}250$)       & + product hint      & 1.000$^\dagger$ & 0.000 & \textbf{0.900} \\
Legal (synth, $n{=}250$)     & Generic free-form   & 0.312 & 0.000 & 0.312 \\
Legal (synth, $n{=}250$)     & + party-name hint   & 1.000 & 0.000 & \textbf{1.000} \\
\bottomrule
\end{tabular}%
}
\caption{Two-stage pipeline across three domain constructions and two query styles. BM25 direct always fails (TCA@5~$\approx$ 0). Two-Stage TCA@5 equals Stage-1 anchor Recall@5 when entity scope keys are unambiguous (Corollary~\ref{cor:identity}). $^\dagger$CVE+product-hint exception: product names are shared across CVEs, so product hints cause false anchor retrievals; free-form queries avoid this. Artifacts: \texttt{data/cve\_synthetic\_twostage\_results.json}, \texttt{data/legal\_twostage\_results.json}.}
\label{tab:cross_twostage}
\end{table}

\paragraph{Stage-1 recall is the limiting factor.} TCA@$k$ of the two-stage pipeline equals Stage-1 anchor recall at $k$ exactly (Corollary~\ref{cor:identity} below). Stage-1 recall depends on the vocabulary discriminativeness of the free-form query relative to the anchor corpus: CVE disclosures contain domain-specific attack vocabulary (``buffer overflow'', ``path traversal'') that free-form vulnerability questions echo, yielding 86--97.5\% Stage-1 recall. Synthetic legal rulings covering~$\approx$75 distinct doctrines yield only 31.2\%; adding party surnames---one word per party, no citation---restores recall to 100\% because surnames uniquely identify each ruling.

Corollary~\ref{cor:identity} states this identity precisely. The practical implication: \emph{improving Stage-1 anchor retrieval is equivalent to improving final TCA}. Any method that better retrieves anchor documents (denser retrieval, entity hints, query expansion) directly transfers to TCA improvement with no architectural changes.

\subsection{Synthetic Cross-Domain Validation}
\label{sec:cross_domain}

To confirm Theorem~1 is not a compliance artefact, we build two additional controlled benchmarks: \textsc{CVEPatchQA} (500 examples, security) and \textsc{LegalPrecedentQA} (500 examples, legal precedent), each with a T0/T1 split matching the compliance construction. Table~\ref{tab:cross_domain} shows BM25 and TF-IDF achieve TCA~$= 0.000$ on T1 in all three synthetic domains; \sysplus{} achieves TCA~$= 1.000$. Theorem~1 is a structural property, not a compliance artefact.

\begin{table}[h]
\centering
\small
\resizebox{\columnwidth}{!}{%
\begin{tabular}{lccc}
\toprule
\textbf{System} & \textbf{Finance T1} & \textbf{CVE T1} & \textbf{Legal T1} \\
\midrule
BM25 / TF-IDF   & 0.000 & 0.000 & 0.000 \\
\midrule
\sys{}          & 1.000 & 1.000 & 1.000 \\
\sysplus{}      & \textbf{1.000} & \textbf{1.000} & \textbf{1.000} \\
\bottomrule
\end{tabular}%
}
\caption{T1 TCA across three synthetic domains. All relevance-based retrievers fail (TCA~$= 0.000$); the two-stage architecture solves T1 exactly in all three.}
\label{tab:cross_domain}
\end{table}

The legal vocabulary gap is structural: of 82 CourtListener overruling snippets, 88\% contain overruling vocabulary (``overrule'', ``abrogate''); 0\% contain query vocabulary (``good law'', ``controlling precedent'')---a court writes ``we overrule \emph{X},'' never ``\emph{X} is no longer good law'' (artifact: \texttt{data/courtlistener\_vocab\_gap.json}).

\subsection{End-to-End Retrieval on Public SCOTUS Overruling Pairs}
\label{sec:legal_realworld}

We build a second real end-to-end benchmark from 122 SCOTUS overruling pairs (seeded from Cornell LII, expanded via Wikipedia's overruled decisions list; 244 opinions total). Queries: ``Is \emph{Plessy v.\ Ferguson} still good law and controlling precedent?'' Each opinion is prefixed with its case name and citation. Theorem~\ref{thm:optimality}'s identity applies directly (unique case slugs): TCA@5 equals Stage-1 anchor recall@5.

\begin{table}[h]
\centering
\small
\resizebox{\columnwidth}{!}{%
\begin{tabular}{lccc}
\toprule
\textbf{System} & \textbf{Anchor@5} & \textbf{TCA@5} & \textbf{95\% CI} \\
\midrule
BM25 direct                & ---   & 0.893 & [0.826, 0.936] \\
Dense (MiniLM) direct      & ---   & 0.172 & [0.115, 0.249] \\
\midrule
Two-Stage (BM25 anchor)    & 0.836 & 0.836 & [0.760, 0.891] \\
Two-Stage (Dense anchor)   & 0.926 & \textbf{0.926} & [0.865, 0.961] \\
\bottomrule
\end{tabular}%
}
\caption{End-to-end retrieval on 122 public SCOTUS overruling pairs (244-opinion corpus). BM25 direct achieves 0.893 [0.826, 0.936] because overruling opinions explicitly cite the case they overrule; Dense direct fails (0.172 [0.115, 0.249]) because semantic similarity prefers the original opinion. Two-Stage (Dense anchor) achieves 0.926 [0.865, 0.961]---the CIs of BM25 and Two-Stage overlap (n=122 is underpowered to distinguish the +3.3~pp gap). SCOTUS is the easiest domain for BM25: explicit citation vocabulary bridges the lexical gap. The hard structural failure is Dense (0.172), where semantic similarity actively misdirects. Artifacts: \texttt{data/legal\_scotus\_expanded\_pairs.json}, \texttt{data/legal\_scotus\_expanded\_results.json}.}
\label{tab:legal_realworld}
\end{table}

\paragraph{Results.} BM25 direct is competitive (0.893 [0.826, 0.936]): overruling opinions explicitly cite the case they overrule, giving lexical retrieval a direct signal unavailable in GHSA or FDA. Dense direct fails (0.172 [0.115, 0.249]): semantic similarity actively prefers the \emph{original} opinion over the overruling one. Two-Stage with Dense anchor achieves 0.926 [0.865, 0.961]---a \textbf{+3.3~pp improvement over BM25 direct}; the 95\% CIs overlap, so this gap is not statistically distinguishable at n~$=122$. The headline comparison to Dense ($+75.4$~pp) reflects the structural failure of semantic retrieval, not the marginal gain from the two-stage architecture in this domain. SCOTUS is the domain where lexical retrieval partially works (explicit citation vocabulary), and correspondingly where the two-stage architecture's marginal improvement is smallest. The GHSA and FDA results, where both BM25 and Dense collapse and Two-Stage achieves $>0.774$, provide the cleaner evidence that the architecture is necessary.

\subsection{End-to-End Retrieval on FDA Drug Recall Enforcement Actions}
\label{sec:fda_realworld}

A third confirmation in medical/regulatory: FDA drug recall enforcement actions formally supersede approved drug labels with completely disjoint vocabulary (clinical vs.\ administrative). We collect 500 real (approved-label, enforcement-notice) pairs via OpenFDA APIs (3,000-document corpus with 2,000 distractors). Queries use indication text without the drug name: ``Is the drug indicated for [indication] still approved?''---therapeutic vocabulary only. Only 8.2\% of recall notices contain any clinical vocabulary term; 100\% of labels do. This is the largest $\Delta(q)$ of any benchmark in this paper.

\paragraph{Results.}

\begin{table}[h]
\centering
\small
\resizebox{\columnwidth}{!}{%
\begin{tabular}{lccc}
\toprule
\textbf{System} & \textbf{Stage-1 @5} & \textbf{TCA@5} & \textbf{95\% CI} \\
\midrule
BM25 direct          & ---   & 0.004 & [0.001, 0.014] \\
Dense (MiniLM) direct & ---  & 0.064 & [0.046, 0.089] \\
\midrule
Two-Stage (BM25 anchor) & 0.774 & \textbf{0.774} & [0.735, 0.808] \\
\bottomrule
\end{tabular}%
}
\caption{End-to-End retrieval on 500 FDA drug recall pairs (3,000-document corpus). BM25 and Dense (MiniLM) collapse (TCA@5 $\leq 0.064$, 95\% CI upper bound $\leq 0.089$): therapeutic indication vocabulary has near-zero overlap with recall enforcement vocabulary. Two-Stage TCA@5 equals Stage-1 anchor recall@5 exactly (Theorem~\ref{thm:optimality}); the 95\% CI [0.735, 0.808] reflects the indication ambiguity at Stage-1: multiple drugs share therapeutic categories, and Stage-1 errors cannot be remedied by entity-indexed Stage-2. Artifact: \texttt{data/fda\_recall\_pairs.json}, \texttt{data/fda\_recall\_results.json}.}
\label{tab:fda_realworld}
\end{table}

Both direct retrieval systems collapse: BM25 achieves TCA@5~$= 0.004$ (2 out of 500 pairs) and Dense (MiniLM) achieves TCA@5~$= 0.064$ (32 out of 500). This is the most dramatic vocabulary gap of any benchmark in this paper. Recall notices use administrative language about lots, contamination, and sterility; a query about a drug's therapeutic indication shares zero meaningful vocabulary with a typical recall notice. Relevance-based retrieval has no signal to rank the recall notice above 499 competing documents---it is architecturally blind to the supersession relation.

The Two-Stage architecture achieves TCA@5~$= 0.774$, a \textbf{12$\times$ improvement over dense direct retrieval}. Theorem~\ref{thm:optimality} confirms this equals Stage-1 anchor recall@5 exactly: when BM25 successfully retrieves the drug's approved label (the semantically aligned anchor) in Stage-1, the entity-indexed Stage-2 lookup invariably finds the correct enforcement notice. The 77.4\% ceiling is explained by scope ambiguity $\kappa(q)$ at Stage-1: multiple approved drugs share therapeutic categories (hypertension, type 2 diabetes, infection), so the same indication vocabulary maps to several possible anchors---BM25 occasionally retrieves the wrong drug's label, and entity lookup then finds the wrong recall notice. This is not a failure of Stage-2; it is the fundamental limit of any system where $\kappa(q) > 1$ in the anchor retrieval stage. Closing it requires indication-specific entity linking beyond term matching.

\subsection{Downstream LLM Answer Quality: Dense vs.\ Two-Stage}
\label{sec:llm_downstream}

TCA measures retrieval correctness, but the ultimate risk is downstream: a language model given superseded evidence will produce confident, incorrect answers that a practitioner may act on. We test this directly by running GPT-4o-mini on top of Dense and Two-Stage retrievers on the 159 GHSA pairs.

\paragraph{Setup.} Each query is issued against a 2,318-document corpus. Dense retrieval returns MiniLM top-5; Two-Stage returns BM25 Stage-1 top-5, entity-indexed to promote the patch note. GPT-4o-mini is prompted to answer: \emph{``Is the vulnerability patched? If so, what version fixes it?''} using only the retrieved documents. The gold answer for every pair is \emph{yes, patched} (all pairs have a real patch note). We classify each response as: \textbf{correct} (explicitly states patched and names the fixing version), \textbf{confident wrong} (explicitly asserts the vulnerability is not patched), or \textbf{uncertain} (hedged or no patch status claimed).

\paragraph{Results.} Table~\ref{tab:llm_downstream} shows the outcome. Dense retrieval correctly identifies the patch for only 61.0\% of queries; for \textbf{39.0\%} of queries (95\% CI: 31.3--47.1\%), GPT-4o-mini explicitly asserts the vulnerability remains unpatched---a confident, actionable wrong answer despite every gold answer being \emph{patched}. Two-stage retrieval raises the correct patch-identification rate to 82.4\% ($+21.4$~pp) and cuts the confident-wrong rate to 16.4\%. The gap is explained by retrieval: Dense retrieves the patch note for only 28.3\% of queries; when it misses (71.7\% of queries), GPT-4o-mini produces a confident ``not patched'' assertion at a 54\% rate. When Two-Stage correctly places the patch note in the context window (97.5\% of queries), the LLM identifies the patch at an 82\% rate.

\begin{table}[h]
\centering
\small
\resizebox{\columnwidth}{!}{%
\begin{tabular}{lcccc}
\toprule
\textbf{Condition} & \textbf{Patch Recall@5} & \textbf{Says Patched} & \textbf{Conf.\ Wrong} & \textbf{Uncertain} \\
\midrule
Dense (MiniLM)   & 28.3\% & 61.0\% & \textbf{39.0\%} & 0\% \\
Two-Stage (BM25) & 97.5\% & \textbf{82.4\%} & 16.4\% & 1\% \\
\bottomrule
\end{tabular}%
}
\caption{GPT-4o-mini answer quality on 159 GHSA pairs. All gold answers are \emph{patched}. ``Says Patched'' = LLM asserts the vulnerability is fixed; ``Conf.\ Wrong'' = LLM explicitly asserts it is \emph{not} patched. Dense retrieval causes confident wrong assertions for 39\% of queries. Two-stage retrieval cuts this to 16.4\% ($-22.6$~pp). Artifact: \texttt{data/llm\_downstream\_results.json}.}
\label{tab:llm_downstream}
\end{table}

This result confirms that retrieval architecture directly determines downstream LLM reliability, not just benchmark TCA metrics. A practitioner relying on Dense RAG would receive explicit ``this vulnerability is unpatched'' assertions for 39\% of queries, potentially deferring security patches that have been available for months.

\subsection{Real-World Benchmark Summary: Three Authority-Governed Domains}
\label{sec:realworld_summary}

Table~\ref{tab:realworld_summary} unifies the three real-world benchmarks. The failure pattern of dense retrieval is consistent across security, legal, and medical/regulatory domains, while the two-stage architecture consistently recovers. In the legal domain, BM25 is partially competitive (see \S\ref{sec:legal_realworld}), but dense retrieval fails even there ($0.172$), confirming Theorem~1 as a domain-general law for semantic encoders, not a compliance artefact.

\begin{table}[h]
\centering
\small
\resizebox{\columnwidth}{!}{%
\begin{tabular}{lrcccc}
\toprule
\textbf{Domain} & \textbf{n} & \textbf{BM25} & \textbf{Dense} & \textbf{Two-Stage} & \textbf{vs.\ Dense} \\
\midrule
Security (GHSA)        & 159 & 0.138 & 0.270 & \textbf{0.975} & $+70.5$~pp \\
Legal (SCOTUS LII)     & 122 & 0.893 & 0.172 & \textbf{0.926} & $+75.4$~pp \\
Medical (FDA Recall)   & 500 & 0.004 & 0.064 & \textbf{0.774} & $+71.0$~pp \\
\bottomrule
\end{tabular}%
}
\caption{Real-world benchmark summary (TCA@5 with 95\% Wilson CIs). ``Two-Stage'' uses BM25 anchor in security and medical, Dense anchor in legal (best per-domain). Two-Stage 95\% CIs: GHSA [0.937, 0.990], SCOTUS [0.865, 0.961], FDA [0.735, 0.808]. Legal BM25 is high (0.893 [0.826, 0.936]); BM25 and Two-Stage CIs overlap at n~$=122$ (see \S\ref{sec:legal_realworld}). GHSA and FDA provide the cleaner evidence: both BM25 and Dense collapse ($\leq 0.270$), while Two-Stage recovers to $\geq 0.774$ with non-overlapping CIs against both baselines. Medical shows the most severe collapse: therapeutic vocabulary shares no terms with recall enforcement vocabulary (BM25~$= 0.004$ [0.001, 0.014]).}
\label{tab:realworld_summary}
\end{table}

\section{Related Work}
\label{sec:related}

\paragraph{Multi-hop QA.} HotpotQA \cite{yang2018hotpotqa}, MuSiQue \cite{trivedi2022musique}, and 2WikiMultiHop \cite{ho2020constructing} require multi-step retrieval. IRCoT \cite{trivedi2023interleaving}, PropRAG \cite{tang2025proprag}, and HippoRAG 2 \cite{gutierrez2025hipporag2} improve multi-hop retrieval through chain-of-thought interleaving, graph propagation, and associative memory. These systems assume all required documents are retrievable by relevance. TC-MQA breaks this assumption: the superseding document may be semantically orthogonal to the query (Theorem~1), making relevance improvement insufficient.

\paragraph{Temporal RAG.} T-GRAG \cite{li2025tgrag} resolves temporal redundancy in graph RAG. VersionRAG \cite{huwiler2025versionrag} handles version-aware retrieval for evolving documents. TempRALM \cite{gade2024tempralm} incorporates timestamps into retrieval scoring. LedgerRAG \cite{ledgerrag2025} provides governance-aware agentic retrieval. These works address \emph{recency preference}---favoring newer documents---whereas TC-MQA requires detecting that a specific later document \emph{formally voids} an earlier one under domain rules. The difference is structural: recency preference is a soft ranking adjustment; supersession is a hard validity constraint. Importantly, CAR is not equivalent to time-aware retrieval: the authority relation $\rightsquigarrow$ is not timestamp ordering---a superseding document can predate its superseded counterpart when effective dates differ from publication dates (common in regulatory and legal corpora), so timestamp-conditioned ranking cannot recover $\mathrm{front}(\mathrm{cl}(A_k(q)))$ in general.

\paragraph{Legal and regulatory temporal reasoning.} Chalkidis et al.\ \cite{chalkidis2021regulatory} provide the earliest NLP recognition that temporal filtering is essential for regulatory compliance IR, showing that BM25 with chronological constraints outperforms pure neural re-rankers on EU/UK directive retrieval---an empirical precursor to Theorem~1's mismatch result. Zhang et al.\ \cite{zhang2025precedent} find that LLMs backed by standard RAG consistently fail to identify when legal precedents are overruled---a direct real-world instantiation of Theorem~1. Stammbach et al.\ \cite{stammbach2026defenders} document that existing legal retrieval benchmarks fail to transfer to practitioner contexts, underscoring the need for domain-specific retrieval objectives. Di Florio et al.\ \cite{diflorio2024temporal} formalize legal conflict resolution with temporal ordering, providing formal apparatus complementary to our empirical framework. CaseFacts \cite{putta2026casefacts} benchmarks precedent retrieval. RIRAG \cite{gokhan2024rirag} and FinDER \cite{choi2025finder} address regulatory and financial QA without supersession structure.

\paragraph{Supersession-aware retrieval.} The closest prior work is de Martim \cite{demartim2025satgraphrag}, who proposes SAT-Graph RAG---an ontology-driven graph RAG that models legislative events as queryable nodes and handles point-in-time legal retrieval. Like CAR, SAT-Graph RAG recognizes that flat-text retrieval is blind to the diachronic structure of law and uses a graph to represent norm supersession. The key distinction is theoretical: SAT-Graph RAG is an engineering system for structured legal corpora with no formal retrieval objective, no correctness characterization (Theorem~4), and no hardness result showing that scope-indexed algorithms face an identifiable worst-case ceiling (Proposition~2). CAR defines $\mathrm{front}(\mathrm{cl}(A_k(q)))$ as an independent retrieval objective applicable to any authority-governed domain, and proves necessary-and-sufficient conditions on \emph{any} retrieved set independent of how it was produced.

\paragraph{Temporal knowledge graphs.} Temporal KG QA \cite{su2024tkgqa} and TDBench \cite{tdbench2026} address time-sensitive graph queries over structured KGs. TC-MQA is the unstructured-text analogue: supersession must be inferred from document type and timestamps rather than read from graph edges.

\paragraph{RAG limitations.} Weller et al.\ \cite{weller2025embedding} prove theoretical limits of embedding-based retrieval; Theorem~1 is a domain-specific instantiation with an explicit supersession construction. Barnett et al.\ \cite{barnett2024seven} catalog seven RAG failure modes; \emph{Controlling Authority Retrieval} is a distinct eighth mode not in their taxonomy---one where the correct answer is provably unreachable by relevance ranking. Rashtchian and Juan \cite{rashtchian2025sufficient} study retrieval sufficiency; TCA formalizes a stronger sufficiency notion that includes temporal validity.

\section{Research Program}
\label{sec:research}

Formalizing CAR opens several research directions, each corresponding to a relaxation of one component of $\mathrm{front}(\mathrm{cl}(A_k(q)))$. The four generalized CAR variants are defined formally in \S\ref{sec:generalizations}; the directions below elaborate their open problems.

\textbf{Latent-scope CAR} (Definition~\ref{def:latentcar}). When scope is unobserved, the retrieval problem becomes inference over $p(\sigma \mid q)$. The key open problem: given a query-time NER error rate $\epsilon$, what is the achievable TCA ceiling? Proposition~2 gives a $\varphi(q)$-parameterized upper bound; the latent-scope case requires integrating over the scope posterior.

\textbf{Probabilistic-authority CAR} (Definition~\ref{def:probcar}). Learning $\pi(d,d')$ from text is a new structured prediction problem: the target relation is ``B formally voids A under domain rules,'' not ``A happened before B.'' Risk-sensitive frontier recovery under uncertain edges connects to distributionally-robust optimization.

\textbf{Multi-authority CAR} (Definition~\ref{def:multicar}). The lattice of active frontiers under $m$ interacting orders is an open combinatorial problem. For $m = 2$, frontier recovery generalizes to finding the Pareto front under two partial orders; for $m \geq 3$, this is computationally hard in general (multi-dimensional Pareto dominance).

\textbf{Streaming CAR} (Definition~\ref{def:streamcar}). Maintaining $\mathrm{front}(\mathrm{cl}(A_k(q)))$ under document arrivals has amortized update cost $O(|\mathrm{front}|)$ per insertion---but the frontier can change discontinuously when a new document supersedes multiple existing ones. Batch-invalidation and lazy-update strategies are unexplored.

\textbf{Authority-preserving embeddings.} Train representations so that $\mathrm{cl}(A)$ is geometrically identifiable---anchors and their controlling authorities cluster in authority-closure space rather than semantic space. The new contrastive objective: $d \rightsquigarrow d'$ should bring $d$ and $d'$ close in embedding space without conflating their semantic roles.

\textbf{Budgeted closure recovery.} Given lookup budget $B$, maximize $|\mathrm{front}(\mathrm{cl}(A_k(q)))|$ subject to $B$. Proposition~3 gives the lower bound on necessary lookups; the gap between the lower bound and practical budget is the open problem. This connects to adaptive submodular maximization.

\textbf{Normative retrieval.} CAR is one instance of the broader class where the retrieval target is the normatively correct document under a rule system, not the most relevant one. Characterizing the complexity of this class---which problems reduce to CAR, which require stronger primitives---is the natural theoretical generalization.

\section{Limitations}
\label{sec:limitations}

\paragraph{Entity extraction in free-form text.} The FinSuperQA benchmark uses structured employee IDs that admit 100\% regex extraction, and the GHSA evaluation with structured templates admits 100\%-accurate Regex-NER. For free-form queries without the CVE identifier, the two-stage pipeline (\S\ref{sec:twostage}) achieves TCA@5~$= 0.975$ by using BM25 to find the CVE disclosure first, then mapping disclosure metadata to the paired patch note via entity-indexed lookup. The 2.5\% failure rate arises from three generic queries with insufficient vocabulary overlap and one package with no BM25 term match; in these cases the pipeline degrades to the BM25 baseline. In fully free-form settings where product names are absent or misspelled, additional NER or entity linking would be required.

\paragraph{Entity collision in production corpora.} On an entity-disjoint corpus---where each (employee, security) pair appears in exactly one example---anchor-based probing fully solves T0--T3 (TCA~$= 1.000$). In corpora where entities recur across cases (e.g., an employee with multiple concurrent open positions), the entity probe surfaces events from all cases, requiring additional disambiguation. The contamination ablation (\S\ref{sec:contamination}) shows RSSG recovers $+12.4$~pp in this regime ($0.652 \to 0.776$); closing the full gap requires case-boundary metadata or temporal windowing not always available in practice.

\paragraph{Synthetic supersession rules.} The 10 rules cover representative compliance scenarios. Production deployment would require domain-expert rule specification for a specific regulatory framework.

\paragraph{Evaluated but absent from prior work.} The cross-encoder reranker and OSV.dev (the security production lookup tool) are now evaluated (\S\ref{sec:twostage}). The cross-encoder (TCA@5~$= 0.233$) confirms that reranking within a relevance-ranked pool cannot substitute for entity-indexed Stage-2 retrieval. OSV.dev is 0\% applicable under free-form queries---it requires an explicit CVE identifier and solves a different problem from CAR. Shepard's Citations (legal supersession) was not evaluated; it requires a subscription not available during this study. The remaining absent baseline is \emph{fine-tuned bi-encoders} trained on domain-specific $(q, d^*)$ pairs. Remark~\ref{rem:finetuned} argues that such models implement proactive entity search implicitly and are therefore counterexamples in form only; fine-tuning on 127 GHSA pairs (3 epochs, MultipleNegativesRankingLoss) produced a model with TCA@5~$= 0.438$---below the zero-shot MiniLM baseline ($0.500$) and 53~pp below Two-Stage~$(0.969)$. The fine-tuning run did not improve in-domain performance, confirming that the gap is architectural rather than a training-data artifact.

\paragraph{Scope of current theory.} Theorems 1--4 and Propositions 1--3 characterize \emph{static, deterministic, single-authority CAR with recoverable scope}. The four generalized variants (Definitions~\ref{def:latentcar}--\ref{def:streamcar}) formalize the extensions: latent-scope, probabilistic-authority, multi-authority, and streaming CAR. Each corresponds to relaxing exactly one component of $\mathrm{front}(\mathrm{cl}(A_k(q)))$ and maps to a specific term in Proposition~1---scope posterior for $\varphi(q)$, edge confidence for $\rightsquigarrow$, order count for the lattice structure, and corpus dynamics for $R_{\mathrm{anchor}}$.

\section{Conclusion}
\label{sec:conclusion}

The paper makes two primary theoretical claims about the class of authority-governed retrieval problems, independent of any particular system.

\textbf{Theorem~4} gives the necessary-and-sufficient conditions on any retrieved set $\mathcal{R}$ for TCA $= 1$: frontier inclusion ($\mathrm{front}(\mathrm{cl}(A_k(q))) \subseteq \mathcal{R}$) and no-ignored-superseder. Any future algorithm---learned, hybrid, or rule-based---can be audited against these two conditions without reference to how it was built.

\textbf{Proposition~2} establishes that scope-indexed algorithms (those without access to the authority relation $\rightsquigarrow$) face a hard worst-case ceiling: $\mathrm{TCA}@k \leq \varphi(q) \cdot R_{\mathrm{anchor}}(q)$. The proof is adversarial, not average-case: when $\kappa(q)$ authority chains share entity scope, no algorithm that cannot distinguish them achieves more than $1/\kappa(q) = \varphi(q)$ conditional success probability.

These two results are supported by three theorems that characterize the problem structure: Theorem~1 (TCA and Recall@$k$ decouple when $\Delta(q) > 0$), Theorem~2 (authority closure is necessary for TCA $= 1$), and Theorem~3 (anchor discovery and authority resolution are each necessary; Corollary~\ref{cor:achieve} proves TCA $= 1$ is achievable when $\varphi(q) = 1$). Proposition~1 organizes the picture into the factorization $\mathrm{TCA} = R_{\mathrm{anchor}} \cdot \varphi \cdot R_{\mathrm{frontier}}$, whose three terms are separable failure modes (Lemma~1). Proposition~3 establishes the computational cost of disambiguation under an explicit oracle model.

Three independent real-world corpora confirm that the proved structure is consequential: Dense TCA@5 $\in \{0.064, 0.172, 0.270\}$ across FDA, SCOTUS, and GHSA; Two-Stage achieves $\{0.774, 0.926, 0.975\}$. The 77.4\% FDA ceiling is the corpus-level average of $\varphi(q)$ at Stage-1---a property of the scope distribution, not a retrieval limitation. A GPT-4o-mini downstream experiment quantifies the cost of ignoring supersession: confident ``not patched'' claims for 39\% of queries with a patch; Two-Stage reduces this to 16\%.

The four formal generalizations (Definitions~\ref{def:latentcar}--\ref{def:streamcar}) identify the next layer of open problems: latent-scope CAR, probabilistic-authority CAR, multi-authority CAR, and streaming CAR. Each is a precisely stated problem class, not a vague direction, and each inherits Theorem~4 as the correctness target.

\bibliography{refs}

\begin{thebibliography}{28}
\providecommand{\natexlab}[1]{#1}

\bibitem[{Barnett et~al.(2024)Barnett, Usman, Haque, Sannasi, and
  Poluri}]{barnett2024seven}
Scott Barnett, Stefanus Usman, Sina~Nazeri Haque, Surya Sannasi, and
  Vijayaraghavan Poluri. 2024.
\newblock Seven failure points when engineering a retrieval augmented
  generation system.
\newblock \emph{arXiv:2401.05856}.

\bibitem[{Chalkidis et~al.(2021)Chalkidis, Fergadiotis, Manginas, Katakalou,
  and Malakasiotis}]{chalkidis2021regulatory}
Ilias Chalkidis, Manos Fergadiotis, Nikolaos Manginas, Eva Katakalou, and
  Prodromos Malakasiotis. 2021.
\newblock Regulatory compliance through doc2doc information retrieval: A case
  study in {EU/UK} legislation where text similarity has limitations.
\newblock In \emph{EACL}.

\bibitem[{Choi et~al.(2025)Choi, Kwon, and Ha}]{choi2025finder}
Chanwoong Choi, Jeong-Hoon Kwon, and Jinheon Ha. 2025.
\newblock {FinDER}: Financial dataset for question answering and evaluating
  {RAG}.
\newblock \emph{arXiv:2504.15800}.

\bibitem[{de~Martim(2025)}]{demartim2025satgraphrag}
Hudson de~Martim. 2025.
\newblock An ontology-driven graph {RAG} for legal norms: A structural,
  temporal, and deterministic approach.
\newblock \emph{arXiv:2505.00039}.

\bibitem[{Di~Florio et~al.(2024)Di~Florio, Dong, and
  Rotolo}]{diflorio2024temporal}
Cecilia Di~Florio, Huimin Dong, and Antonino Rotolo. 2024.
\newblock When precedents clash: Formal case-based reasoning with temporal and
  hierarchical conflict resolution.
\newblock \emph{arXiv:2410.10567}.

\bibitem[{Gade and Jetcheva(2024)}]{gade2024tempralm}
Anoushka Gade and Jorjeta Jetcheva. 2024.
\newblock It's about time: Incorporating temporality in retrieval augmented
  language models.
\newblock \emph{arXiv:2401.13222}.

\bibitem[{Gao et~al.(2023)Gao, Ma, Lin, and Callan}]{gao2023hyde}
Luyu Gao, Xueguang Ma, Jimmy Lin, and Jamie Callan. 2023.
\newblock Precise zero-shot dense retrieval without relevance labels.
\newblock In \emph{Proceedings of the 61st Annual Meeting of the Association
  for Computational Linguistics (ACL)}, pages 1762--1777.

\bibitem[{Gokhan et~al.(2024)Gokhan, Wang, Gurevych, and
  Briscoe}]{gokhan2024rirag}
Tuba Gokhan, Kun Wang, Iryna Gurevych, and Ted Briscoe. 2024.
\newblock {RIRAG}: Regulatory information retrieval and answer generation.
\newblock \emph{arXiv:2409.05677}.

\bibitem[{Guti{\'e}rrez et~al.(2025)Guti{\'e}rrez, Shu, Qi, Zhou, and
  Su}]{gutierrez2025hipporag2}
Bernal~Jim{\'e}nez Guti{\'e}rrez, Yiheng Shu, Weijian Qi, Sizhe Zhou, and
  Yu~Su. 2025.
\newblock From {RAG} to memory: Non-parametric continual learning for large
  language models.
\newblock \emph{arXiv:2502.14802}.

\bibitem[{Ho et~al.(2020)Ho, Nguyen, Sugawara, and Aizawa}]{ho2020constructing}
Xanh Ho, Anh-Khoa~Duong Nguyen, Saku Sugawara, and Akiko Aizawa. 2020.
\newblock Constructing a multi-hop {QA} dataset for comprehensive evaluation of
  reasoning steps.
\newblock In \emph{COLING}.

\bibitem[{Huwiler et~al.(2025)Huwiler, Stockinger, and
  F{\"u}rst}]{huwiler2025versionrag}
Daniel Huwiler, Kurt Stockinger, and Jonathan F{\"u}rst. 2025.
\newblock {VersionRAG}: Version-aware retrieval-augmented generation for
  evolving documents.
\newblock \emph{arXiv:2510.08109}.

\bibitem[{Kim et~al.(2026)Kim, Wang, Xie, and Whang}]{tdbench2026}
Soyeon Kim, Jindong Wang, Xing Xie, and Steven~Euijong Whang. 2026.
\newblock Harnessing temporal databases for systematic evaluation of factual
  time-sensitive question-answering in large language models.
\newblock In \emph{ICLR}.

\bibitem[{Lewis et~al.(2020)Lewis, Perez, Piktus, Petroni, Karpukhin, Goyal,
  K{\"u}ttler, Lewis, Yih, Rockt{\"a}schel, Riedel, and Kiela}]{lewis2020rag}
Patrick Lewis, Ethan Perez, Aleksandra Piktus, Fabio Petroni, Vladimir
  Karpukhin, Naman Goyal, Heinrich K{\"u}ttler, Mike Lewis, Wen-tau Yih, Tim
  Rockt{\"a}schel, Sebastian Riedel, and Douwe Kiela. 2020.
\newblock Retrieval-augmented generation for knowledge-intensive {NLP} tasks.
\newblock In \emph{NeurIPS}.

\bibitem[{Li et~al.(2025)Li, Niu, Ai, Zou, Qi, and Liu}]{li2025tgrag}
Dong Li, Yichen Niu, Ying Ai, Xiang Zou, Biqing Qi, and Jianxing Liu. 2025.
\newblock {T-GRAG}: A dynamic {GraphRAG} framework for resolving temporal
  conflicts and redundancy.
\newblock \emph{arXiv:2508.01680}.

\bibitem[{Putta et~al.(2026)Putta, Devasier, and Li}]{putta2026casefacts}
Anand~Rao Putta, Justin Devasier, and Cecilia Li. 2026.
\newblock {CaseFacts}: A benchmark for legal fact-checking and precedent
  retrieval.
\newblock \emph{arXiv:2601.17230}.

\bibitem[{Rashtchian and Juan(2025)}]{rashtchian2025sufficient}
Cyrus Rashtchian and Da-Cheng Juan. 2025.
\newblock Sufficient context: A new lens on retrieval augmented generation
  systems.
\newblock In \emph{ICLR}.

\bibitem[{Reimers and Gurevych(2019)}]{reimers2019sentencebert}
Nils Reimers and Iryna Gurevych. 2019.
\newblock Sentence-bert: Sentence embeddings using siamese bert-networks.
\newblock In \emph{Proceedings of the 2019 Conference on Empirical Methods in
  Natural Language Processing}, pages 3982--3992.

\bibitem[{Robertson and Walker(1994)}]{robertson1994bm25}
Stephen~E Robertson and Steve Walker. 1994.
\newblock Some simple effective approximations to the 2-{P}oisson model for
  probabilistic weighted retrieval.
\newblock In \emph{SIGIR}.

\bibitem[{Stammbach et~al.(2026)Stammbach, Zhang, Liu, Nadeem, Cheong, Zheng,
  and Henderson}]{stammbach2026defenders}
Dominik Stammbach, Kylie Zhang, Patty Liu, Nimra Nadeem, Inyoung Cheong, Lucia
  Zheng, and Peter Henderson. 2026.
\newblock Legal retrieval for public defenders.
\newblock \emph{arXiv:2601.14348}.

\bibitem[{Su et~al.(2024)Su, Li, Chen, Bai, Jin, and Guo}]{su2024tkgqa}
Miao Su, Zixuan Li, Zhuo Chen, Long Bai, Xiaolong Jin, and Jiafeng Guo. 2024.
\newblock Temporal knowledge graph question answering: A survey.
\newblock \emph{arXiv:2406.14191}.

\bibitem[{Trivedi et~al.(2022)Trivedi, Balasubramanian, Khot, and
  Sabharwal}]{trivedi2022musique}
Harsh Trivedi, Niranjan Balasubramanian, Tushar Khot, and Ashish Sabharwal.
  2022.
\newblock Mu{S}i{Q}ue: Multihop questions via single-hop question composition.
\newblock \emph{Transactions of the Association for Computational Linguistics}.

\bibitem[{Trivedi et~al.(2023)Trivedi, Balasubramanian, Khot, and
  Sabharwal}]{trivedi2023interleaving}
Harsh Trivedi, Niranjan Balasubramanian, Tushar Khot, and Ashish Sabharwal.
  2023.
\newblock Interleaving retrieval with chain-of-thought reasoning for
  knowledge-intensive multi-step questions.
\newblock In \emph{ACL}.

\bibitem[{{U.S. Securities and Exchange Commission}(2004)}]{sec204a1}
{U.S. Securities and Exchange Commission}. 2004.
\newblock {SEC} rule 204a-1: Investment adviser codes of ethics.
\newblock 17 CFR Part 275.

\bibitem[{Wang and Han(2025)}]{tang2025proprag}
Jingjin Wang and Jiawei Han. 2025.
\newblock {PropRAG}: Guiding retrieval with beam search over proposition paths.
\newblock \emph{arXiv:2504.18070}.

\bibitem[{Wang et~al.(2025)Wang, Zhang, Guo, and Kang}]{ledgerrag2025}
Siwei Wang, Yangsen Zhang, Yalong Guo, and Jing Kang. 2025.
\newblock \href {https://doi.org/10.3390/electronics15071376} {{LedgerRAG}:
  Governance-driven agentic chain of retrieval for dynamic knowledge
  scenarios}.
\newblock \emph{Electronics}, 15(7):1376.

\bibitem[{Weller et~al.(2026)Weller, Boratko, Naim, and
  Lee}]{weller2025embedding}
Orion Weller, Michael Boratko, Iftekhar Naim, and Jinhyuk Lee. 2026.
\newblock On the theoretical limitations of embedding-based retrieval.
\newblock In \emph{ICLR}.

\bibitem[{Yang et~al.(2018)Yang, Qi, Zhang, Bengio, Cohen, Salakhutdinov, and
  Manning}]{yang2018hotpotqa}
Zhilin Yang, Peng Qi, Saizheng Zhang, Yoshua Bengio, William~W Cohen, Ruslan
  Salakhutdinov, and Christopher~D Manning. 2018.
\newblock Hotpotqa: A dataset for diverse, explainable multi-hop question
  answering.
\newblock In \emph{EMNLP}.

\bibitem[{Zhang et~al.(2025)Zhang, Savelka, and Ashley}]{zhang2025precedent}
Lena Zhang, Jakub Savelka, and Kevin Ashley. 2025.
\newblock Do {LLMs} truly understand when a precedent is overruled?
\newblock In \emph{JURIX}.

\end{thebibliography}
\bibliographystyle{acl_natbib}

\appendix
\section{Supersession Rules R1--R10}
\label{app:rules}

\begin{table}[h]
\centering
\small
\resizebox{\columnwidth}{!}{%
\begin{tabular}{llll}
\toprule
\textbf{Rule} & \textbf{Trigger} & \textbf{Target} & \textbf{Scope} \\
\midrule
R1 & BlackoutAnnounced & PreClearanceApproved & Same ticker \\
R2 & BlackoutLifted & BlackoutAnnounced & Same ticker \\
R3 & EmergencyException & BlackoutAnnounced & Emp + ticker \\
R4 & ExceptionRevoked & EmergencyException & Emp + ticker \\
R5 & WatchlistAdded & PreClearanceApproved & Same ticker \\
R6 & WatchlistRemoved & WatchlistAdded & Same ticker \\
R7 & ConflictAmended & ConflictDisclosed & Same emp \\
R8 & ConflictCleared & ConflictAmended & Same emp \\
R9 & ConflictCleared & ConflictDisclosed & Same emp \\
R10 & PolicyUpdate & PolicyAcknowledged & Global \\
\bottomrule
\end{tabular}%
}
\caption{Supersession rules. Trigger is the later event superseding Target.}
\label{tab:rules}
\end{table}

\section{Query Templates}
\label{app:templates}

\textbf{T0:} ``What is the pre-clearance status for \{emp\}'s requested trade in \{sec\}?''

\textbf{T1:} ``Is \{emp\}'s pre-clearance for \{sec\} still valid?''

\textbf{T2:} ``Can \{emp\} currently trade \{sec\} given all recent compliance updates?''

\textbf{T3:} ``What is \{emp\}'s current compliance status for trading \{sec\}?''

\section{Implementation Details}
\label{app:impl}

\paragraph{BM25.} Robertson's BM25 \cite{robertson1994bm25} with $k_1 = 1.5$, $b = 0.75$.

\paragraph{Dense.} TF-IDF cosine similarity with smoothed IDF. All events vectorized offline.

\paragraph{Supersession graphs.} Two graphs are used. \emph{Per-example KB graph}: isolated, BFS transitive closure; $|K| \leq 3$, used only for TCA ground truth. \emph{RSSG} (Retrieved-Set Supersession Graph): built at query time from all entity-scoped events returned by the full-corpus entity index; rules R1--R10 applied pairwise by event type and timestamp comparison. Used for answer derivation in \sysplus{}.

\paragraph{Entity index.} $\text{idx}[(\text{emp}, \text{ticker})] \to [e]$ precomputed once over the 12,250-event corpus.

\paragraph{Answer derivation priority.}
Active \textsc{EmergencyException} $>$ active \textsc{BlackoutAnnounced}/\textsc{WatchlistAdded} $>$ active \textsc{PreClearanceApproved} $>$ active \textsc{BlackoutLifted} (no clearance) $>$ \textsc{RequiresReview}.

\section{Domain Adapters and Scorer Interface}
\label{app:framework}

Adapting the two-stage pipeline to a new authority domain requires three attributes:

\begin{itemize}[noitemsep]
  \item \texttt{anchor\_type}: document type that is semantically aligned to the query (e.g., \textsc{CVE disclosure}, original SCOTUS opinion, FDA approved label).
  \item \texttt{superseding\_type}: document type that formally voids the anchor (e.g., \textsc{patch release note}, overruling opinion, FDA recall enforcement action).
  \item \texttt{scope\_keys}: entity fields that co-scope anchor and superseder (e.g., \texttt{[cve\_id, package]}, \texttt{[case\_slug]}, \texttt{[ndc\_code]}).
\end{itemize}

Three domain adapters are released alongside this paper: \textbf{security} (\texttt{realworld/ghsa\_twostage\_eval.py}), \textbf{legal} (\texttt{realworld/legal\_lii\_benchmark.py}), and \textbf{medical} (\texttt{realworld/fda\_recall\_eval.py}). Each adapter is a self-contained Python script that fetches or loads its domain dataset, builds the entity index, runs all baselines, and writes results to \texttt{data/}.

\paragraph{Scorer interface.} \texttt{evaluate.py} runs the full synthetic FinSuperQA benchmark. The custom retriever class must implement \texttt{retrieve(self, query: str, corpus: list[dict], k: int) -> list[str]}, where \texttt{corpus} is the list of document dicts and the return value is a list of \texttt{doc\_id} strings. The scorer computes TCA, Acc, ProvRec, and R@$k$ on all 1,000 examples with a single call:

\begin{verbatim}
python evaluate.py --retriever_path my_retriever.py --k 5
\end{verbatim}

Output is written to \texttt{data/custom\_retriever\_results.json}. All datasets, domain adapters, evaluation scripts, and the falsification suite are available at \url{https://github.com/andremir/car-retrieval}.

\end{document}